\newcommand{\circled}[1]{\text{\ding{\the\numexpr 171 + #1}}}
\newcommand{\Checkmark}{\text{\ding{51}}\xspace}
\newcommand{\Cross}{\text{\ding{55}}\xspace}
\newcommand{\modefun}{\Delta}
\newcommand{\pbox}[2][c]{\begin{tabular}[#1]{@{}l@{}}#2\end{tabular}}
\newcommand{\infrule}[3][\void]{%
  {\renewcommand\arraystretch{1.25}
    \ifx\void#1\else\raisebox{-7.8pt}{\IR{#1}}\hspace{0.5em}\fi
    \begin{array}[t]{@{\hspace*{1em}}c@{\hspace*{1em}}}#2\\\hline
      #3
    \end{array}    
}}
\newcommand{\infrulecond}[1]{\raisebox{-7.8pt}{\ $#1$}}
\newcommand{\IR}[1]{\text{\textsf{#1}}\xspace}
\definecolor{DodgerUniformBlue}{rgb}{0.0,0.353,0.612}
\newcommand{\define}[1]{\emph{\textcolor{DodgerUniformBlue}{#1}}}
\newcommand{\gammaone}{\gamma_{\mathrm{one}}}
\newcommand{\gammaleft}{\gamma_{\mathrm{left}}}
\newcommand{\gammaright}{\gamma_{\mathrm{right}}}
\newcommand{\compl}[1]{\mathrel{\overline{#1}}}
\newcommand{\PQ}[1]{{\boldsymbol{\mathsf{#1}}}}
\newcommand{\TO}[1]{\boldsymbol{\mathsf #1}}
\newcommand{\calT}{{\cal T}}
\newcommand{\calM}{{\cal M}}
\newcommand{\calF}{{\cal F}}
\newcommand{\opFalse}{\mathit{false}}
\newcommand{\opTrue}{\mathit{true}}
\def\P_#1{\PQ{P}_{#1}\,}
\newcommand{\X}{\TO{X}\,}
\newcommand{\U}{\,\TO{U}\,}
\newcommand{\R}{\,\TO{R}\,}
\newcommand{\W}{\,\TO{W}\,}
\newcommand{\F}{\TO{F}\,}
\newcommand{\G}{\TO{G}\,}
\renewcommand{\Pr}{\operatorname{Pr}}
\newcommand{\Prescribed}{\mathop{\mathit{Prescribed}}}
\newcommand{\Subtree}{\mathop{\mathit{Subtree}}}
\newcommand{\Runs}{\mathop{\mathit{Runs}}}
\newcommand{\Paths}{\mathop{\mathit{Paths}}}
\newcommand{\Succ}{\mathop{\mathit{Succ}}}
\newcommand{\Dist}{\mathop{\mathit{Dist}}}
\newcommand{\start}{\mathrm{start}}
\newcommand{\act}{\mathrm{act}}
\newcommand{\init}{\mathrm{init}}
\newcommand{\last}{\operatorname{last}}
\newcommand{\first}{\operatorname{first}}
\newcommand{\dom}{\operatorname{dom}}
\newcommand{\final}{\mathrm{final}}
\newcommand{\fin}{\mathrm{fin}}
\newcommand{\TABLEAU}{\mathop{\text{\textnormal{\textsc{Tableau}}}}}
\newcommand{\CHOOSE}{\mathop{\text{\textnormal{\textsc{Choose}}}}}
\newcommand{\GAMMA}{\mathop{\text{\textnormal{\textsc{Gamma}}}}}
\newcommand{\FORCE}{\mathop{\text{\textnormal{\textsc{Force}}}}}
\newcommand{\BSCC}{\mathop{\text{\textnormal{\textsc{Bscc}}}}}
\newcommand{\pair}[2]{\langle#1,#2\rangle}
\newcommand{\m}{{\mathrm m}}
\begin{document}

\title{\bfseries Tableaux for Policy Synthesis for MDPs with\\ PCTL*
  Constraints\footnote{This is a longer version of the paper
\textit{Peter Baumgartner, Sylvie Thi\'ebaux, Felipe W. Trevizan:
  Tableaux for Policy Synthesis for MDPs with PCTL* Constraints.
  Proceedings of TABLEAUX 2017, pp. 175--192, Springer 2017},
  \url{https://doi.org/10.1007/978-3-319-66902-1_11}. It contains the proofs of the
  main results and it fixes a bug related to the definition of the semantics of PCTL*
in terms of Markov chains. The semantics is now consistent with the standard
semantics. It required adapting the calculus' $\PQ P$-rule to the new definition.}
  }
\author{Peter Baumgartner, Sylvie Thi\'ebaux, and Felipe Trevizan\\[2ex]
Data61/CSIRO and Research School of Computer Science, ANU, Australia\\[2ex]
{\small Email: \url{first.last@anu.edu.au}}}
\maketitle

\begin{abstract}
Markov decision processes (MDPs) are the standard formalism for
modelling sequential decision making in stochastic environments. Policy
synthesis addresses the problem of how to control or limit the decisions an
agent makes so that a given specification is met. In this paper we
consider PCTL*, the probabilistic counterpart of CTL*, as the specification
language. Because in general the policy synthesis problem for
PCTL* is undecidable, we restrict to policies whose execution history memory
is finitely bounded a priori.
Surprisingly, no algorithm for policy synthesis for this natural and
expressive framework has been developed so far.  We close this gap and
describe a tableau-based algorithm that, given an MDP and a PCTL* specification, derives
in a non-deterministic way a system of (possibly nonlinear) equalities
and inequalities. The solutions of this system, if any, describe the
desired (stochastic) policies.
Our main result in this paper is the correctness of our method, i.e.,
soundness, completeness and termination.  
\end{abstract}

\section{Introduction}
Markov decision processes (MDPs) are the standard formalism for modelling sequential
decision making in stochastic environments, where the effects of an
agent's actions are only probabilistically known. The core problem is to
synthesize a policy prescribing or restricting the actions that the agent may
undertake, so as to guarantee that a given specification is met.  Popular
specification languages for this purpose include CTL, LTL, and their probabilistic
counterparts PCTL and probabilistic LTL (pLTL).  Traditional algorithms for policy
synthesis and probabilistic temporal logic
model-checking~\cite{DBLP:journals/jacm/CourcoubetisY95,DBLP:conf/atva/KwiatkowskaP13}
are based on bottom-up formula
analysis~\cite{DBLP:conf/fsttcs/KuceraS05,DBLP:conf/sfm/KwiatkowskaNP07} or Rabin
automata~\cite{DBLP:conf/ifipTCS/BolligC04,DBLP:journals/tac/DingSBR14,DBLP:conf/cdc/SvorenovaCB13}.

We deviate from this mainstream research in two ways.
The first significant deviation 
is that we consider PCTL* as a specification
language, whereas previous synthesis approaches have been limited to pLTL and PCTL.  PCTL* is the probabilistic counterpart of CTL* and subsumes
both PCTL and pLTL. For example, the PCTL* formula
$\P_{\ge 0.8} \G ((T > 30\degree) \rightarrow \P_{\ge 0.5}  \F \G (T < 24\degree))$ says \textit{``with probability
at least 0.8, whenever the temperature exceeds 30\degree\ it will eventually stay below
24\degree\ with probability at least 0.5''}. Because of the nested probability operator
$\PQ P$ the formula is not in pLTL, and because of the nested temporal operators $\TO F \TO G$
it is not in PCTL either.

Because in its full generality the policy synthesis problem for PCTL* is highly
undecidable~\cite{DBLP:conf/lics/BrazdilBFK06},
one has to make concessions to obtain a decidable fragment. In this
paper we chose to restrict to policies whose execution history memory is
finitely bounded a priori. (For example, policies that choose actions in
the current state dependent on the last ten preceding states.)
However, we do target synthesizing stochastic policies, i.e., the actions are chosen according to a
probability distribution (which generalizes the deterministic case and is known to be needed to satisfy certain formulas \cite{DBLP:conf/ifipTCS/BolligC04}).
Surprisingly, no algorithm for policy synthesis in this somewhat restricted yet 
natural and expressive framework has been developed so far, and this paper closes
this gap.

The second significant deviation from the mainstream is that we pursue a different 
approach based on analytic tableau and mathematical
programming. Our tableau calculus is goal-oriented by focusing on the given
PCTL* formula, which leads to analysing runs only on a by-need basis. This restricts the search
space to partial policies that only cover the states reachable from
the initial state under the policy and for which the formula imposes  
constraints on the actions that can be selected.
In contrast, traditional automata based approaches require a full-blown state space
exploration. (However, we do not have an implementation yet that allows us to evaluate the
practical impact of this.) 
We also believe that our approach, although using somewhat non-standard tableau
features, is conceptually simpler 
and easier to comprehend. Of course, this is rather subjective.

On a high level, the algorithm
works as follows. The input is an MDP, the finite-history component of the policy to
be synthesized,
and a PCTL* formula to be satisfied. Starting from the MDP's initial state, the
tableau calculus symbolically executes the transition system given by the MDP by
analysing the syntactic structure of given PCTL* formula, as usual with tableau
calculi.  Temporal formulas (e.g.,\ $\PQ F \PQ G$-formulas) are expanded repeatedly
using usual expansion laws and trigger state transitions. The process stops at
trivial cases or when a certain loop condition is met.  The underlying loop checking
technique was developed only recently, by Mark Reynolds, in the context of tableau
for satisfiability checking of LTL
formulas~\cite{Reynolds:LTL-Tableaux:GandALF:2016}. It is an essential ingredient of
our approach and we adapted it to our probabilistic setting.

Our tableaux have two kinds of branching. One kind is traditional or-branching,
which represents non-deterministic choice by going down exactly one child node.
It is used, e.g., in conjunction with recursively calling the 
tableau procedure itself. Such calls are necessary to deal with nested $\PQ P$-operators,
since at the time of analyzing a $\PQ P$-formula it is, roughly speaking,
unknown if the formula will hold true under the policy computed only later, as a result
of the algorithm. The
other kind of branching represents a union of alternatives. It is used for
disjunctive formulas and for branching out from a state into successor
states. Intuitively, computing the probability of a disjunctive formula
$\phi_1 \vee \phi_2$ is a function of the probabilities of \emph{both} $\phi_1$ and
$\phi_2$, so both need to be computed. Also, the
probability of an $\TO X$-formula $\TO X \phi$ at a given state is a function of the
probability of $\phi$ at \emph{all} successor states, and so, again, all successor
states need to be considered.

The tableau construction always terminates and derives a system of (possibly
nonlinear) equalities and inequalities over the reals. The solutions of this system, if
any, describe the desired stochastic, finite-history policies. The idea of
representing policies as the solutions of a set of mathematical constraints is
inspired by the abundant work in operations research, artificial intelligence, and
robotics that optimally solves MDPs with simpler constraints using linear programming
\cite{altman:99,DBLP:conf/ijcai/DolgovD05,DBLP:conf/icra/DingPS13,DBLP:conf/aips/TrevizanTSW16}.

Our main result in this paper is the correctness of our algorithm, i.e., 
soundness, completeness and termination. To our knowledge, it is the first and only policy
synthesis algorithm for PCTL* that doesn't restrict the language (but only slightly the policies).

\subsection*{Related Work}
Methods for solving the PCTL* \emph{model checking} problem over
Markov Chains are well established. The (general) policy synthesis however is harder than the model
  checking problem; it is known to be undecidable for even PCTL.
The main procedure works bottom-up from the syntax tree of the given
formula, akin to the standard CTL/CTL* model checking procedure.
Embedded $\PQ P$-formulas are recursively abstracted into boolean variables representing
the sets of states satisfying these formulas, which are computed by LTL model
checking techniques using Rabin automata. Our \emph{synthesis} approach is rather
different. While there is a rough correspondence in terms of recursive calls to treat
$\PQ P$ formulas, we do not need Rabin (or any other) automata; they are supplanted
by the loop-check technique mentioned above.

The work the most closely related to ours is that of 
Br{\'{a}}zdil \emph{et.\
  al.}~\cite{DBLP:conf/stacs/BrazdilKS05,DBLP:conf/concur/BrazdilF07,DBLP:conf/icalp/BrazdilFK08}.
Using B\"uchi automata, they obtain complexity results
depending on the variant of the synthesis problem 
studied. However, they consider only \emph{qualitative} fragments. For the case
of interest in this paper, PCTL*, they obtain results for the fragment
qPCTL*. The logic qPCTL* limits the use of the path quantifier $\PQ P$ to formulas of
the form $\P_{= 1} \psi$ or $\P_{= 0} \psi$, where $\psi$ is a path formula. 
On the other hand, we cover the full logic PCTL* which has arbitrary formulas of the
form $\P_{\sim z} \psi$ where ${\sim} \in \{<,\le,>,\ge\}$ and $z \in [0,1]$.
In contrast to the works mentioned, we have to restrict to memory-dependent policies
with an \emph{a priori} limited finite memory. Otherwise the logic becomes
highly undecidable~\cite{DBLP:conf/lics/BrazdilBFK06}.%

\section{Preliminaries}
\label{sec:preliminaries}
We assume the reader is familiar with basic concepts of Markov Decision Processes
(MDPs), probabilistic model checking, and policy synthesis. See~\cite{DBLP:conf/atva/KwiatkowskaP13,DBLP:conf/sfm/ForejtKNP11,DBLP:books/daglib/0020348}
for introductions and overviews. In the following we summarize the notions
relevant to us and we introduce our notation.

Given a fixed finite vocabulary $AP$ of \define{atomic propositions} $a,b,c,\ldots$ , a
\define{(propositional) interpretation $I$} is any subset of $AP$. It represents the
assignment of each element in $I$ to $\opTrue$ and each other atomic
proposition in $AP \setminus I$ to $\opFalse$. 
A \define{distribution on a countable set $X$} is a
  function $\mu\colon X  \mapsto [0,1]$ such that $\sum_{x\in X}\mu(x) = 1$, and $\Dist(X)$ is the set of all
  distributions on $X$.

A \define{Markov Decision Process (MDP)} is a tuple $\calM = (S, s_\init, A, P, L)$
where:
$S$ is a finite set of states;
$s_\init \in S$ is the \define{initial state};
$A$ is a finite set of \define{actions} and we denote by $A(s) \subseteq A$ the \define{set of
actions enabled} in $s \in S$;
$P(t | s,\alpha)$ is the probability of transitioning to $t \in S$ after applying $\alpha \in A(s)$
in state $s$; and
$L \colon S \mapsto 2^{AP}$ labels each state in $S$ with an interpretation.
We assume that every state has at least one enabled action, i.e., $A(s) \neq \emptyset$
for all $s \in S$, and that $P$ is a distribution on enabled actions, i.e.,
$P(\cdot |s,\alpha) \in \Dist(S)$ iff $\alpha\in A(s)$
and $\Sigma_{t\in S} P(t | s,\alpha) = 0$ iff $\alpha\notin A(s)$.
For any $s$ and $\alpha \in A(s)$ let $\Succ(s,\alpha)  = \{ t \mid P(t | s, \alpha) > 0 \}$ be the states reachable from $s$ with non-zero
probability after applying $\alpha$.


Given a state $s \in S$ of $\calM$, a \define{run from $s$ (of $\calM$)} is an infinite sequence $r = (s = s_1)
\stackrel{\alpha_1}{\longrightarrow} s_2 \stackrel{\alpha_2}{\longrightarrow} s_3 \cdots $ of states
$s_i \in S$ and actions $\alpha_i \in A(s_i)$ such that $P(s_{i+1}|s_i, \alpha_i) > 0$, for all $i
\ge 1$.
%
%
%
%
We denote by \define{$\Runs(s)$} the set of all runs from $s \in S$ and \define{$\Runs = \cup_{s\in
S} \Runs(s)$}.
%
%
A \define{path from $s \in S$ (of $\calM$)} is a finite prefix of a run from $s$ and we define $\Paths(s)$ and
$\Paths$ in analogy to $\Runs(s)$ and $\Runs$. 
We often write runs and paths in abbreviated form as state sequences $s_1 s_2 \cdots$ and
leave the actions implicit. 
Given a path $p = s_1 s_2 \cdots s_n$ let \define{$\first(p) = s_1$} and
\define{$\last(p) = s_n$}.
Similarly, for a run $r = s_1 s_2 \cdots$, \define{$\first(r) = s_1$}
and \define{$r[n] := s_n s_{n+1} \cdots$}, for any $n \ge 1$.

A policy $\pi$ represents a decision rule on how to choose an
action given some information about the environment. 
In its most general form,
  a \define{history-dependent (stochastic) policy (for $\calM$)} is a function
  $\pi\colon \Paths \mapsto \Dist(A)$ such that, for all $p \in \Paths$,
$\pi(p)(\alpha) > 0$ only if $\alpha \in A(\last(p))$. 
%
Technically, the MDP
$\calM$ together with $\pi$ induces
an infinite-state Markov chain
$\calM_{\pi} = (S^{\pi}, s_\init^{\pi}, A, P^{\pi},  L^{\pi})$
over $S^{\pi} = \Paths$ and this way provides
a probability measure $\Pr^{M_\pi}$ on the set of runs of $\calM$ under $\pi$.
See~\cite{kemeny2012denumerable,DBLP:books/daglib/0020348} for details.

However, since $\Paths$ is an infinite set, a history-dependent policy might not be
representable; moreover, the problem of finding such a policy that satisfies PCTL*
constraints is undecidable~\cite{DBLP:conf/lics/BrazdilBFK06}.
We address these issues by working with finite-memory policies.

Finite-memory policies provide a distribution on actions for a current state from $S$ and a current
\emph{mode}. Formally, a
\define{finite-memory policy (for an MDP $\calM$)} is a DFA
$\pi_\fin = (M, \start, \modefun, \act)$ where  $M$ is a finite set of \define{modes},
$\start \colon S \mapsto M$ returns an initial mode to pair with a state $s \in S$,
$\modefun\colon M \times S \mapsto M$ is the mode transition function, and
$\act\colon M \times S \mapsto \Dist(A)$ is the action probability function such
that, for all 
$\pair m s \in M\times S$, $\act(m, s)(\alpha) > 0$ only if $\alpha \in A(s)$. We abbreviate
$\act(m, s)(\alpha)$ as $\act(m,s,\alpha)$. 

Similarly to history-dependent policies above,
a given $\calM$ and $\pi_\fin$ induce a Markov chain denoted as
\define{$\calM_{\pi_\fin} = (S^{\pi_\fin}, s_\init^{\pi_\fin}, A, P^{\pi_\fin},
  L^{\pi_\fin})$}. 
It has the finite state space $S^{\pi_\fin} = M \times S$, initial state
$s_\init^{\pi_\fin} = \pair {\start(s_\init)} {s_\init}$,
set of actions $A$, transition probability function
$P^{\pi_\fin}(\pair {m'} {s'} | \pair m s ) := \Sigma_{\alpha \in A(s)}\,\act(m,s,\alpha) \cdot P(s'
| s,\alpha)$ if $m' = \Delta(m,s)$ and 0 otherwise, and labelling function
$L^{\pi_\fin}(\pair m s) := L(s)$.
A \define{run from
  $s_1$ of $\calM_{\pi_\fin}$} is a sequence of the form
$\pair {m_1} {s_1} \pair {m_2} {s_2} \cdots $ such that
$m_1 = \start(s_1)$, 
$m_{i+1} = \Delta(m_i,s_i)$ and
$P^{\pi_\fin}(\pair {m_{i+1}} {s_{i+1}} | \pair {m_i} {s_i}) > 0$, for all
$i \ge 1$. 
Notice that any run from $s_1$ of $\calM_{\pi_\fin}$ satisfies
$\act(m_i,s_i,\alpha_i) \cdot P(s_{i+1} | s_i,\alpha_i) > 0$  for some $\alpha_i \in A(s_i)$
and hence  induces a run
$s_1 \stackrel{\alpha_1}{\longrightarrow} s_2 \stackrel{\alpha_2}{\longrightarrow} s_3\cdots$ from
$s_1$ of $\calM$.
The notions of $\Runs$, $\Paths$, $\first$, $\last$ etc apply to Markov chains as well. 
For instance, $\first(\pair {m_1} {s_1} \pair {m_2} {s_2} \cdots ) = \pair {m_1} {s_1}$.
Let $\Pr^{\calM_{\pi_\fin}}$ denote the probability measure on the set of runs of $\calM_{\pi_\fin}$.

Finite-memory policies are more expressive than Markovian policies. For instance, the
$\Delta$-component of $\pi_\fin$ can be 
setup, e.g., to encode in $\pair m s$ ``the last ten states preceding $s$''.

\subsection{Policy Synthesis for PCTL*}
\define{(PCTL*) formulas} follow the following grammar: 
\begin{align}
  \phi\quad & := \quad \opTrue \mid a \in AP \mid \phi \wedge \phi \mid \neg\phi \mid \P_{\sim z} \psi \tag{State formula}\\
  \psi\quad & := \quad \phi \mid \psi  \wedge \psi \mid \neg\psi \mid \X \psi \mid \psi \U \psi \tag{Path formula}
\end{align}
In the definition of state formulas,
${\sim} \in \{ <, \leq, >, \geq \}$ and $0 \leq z \leq 1$. A \define{proper path formula} is a path
formula that is not a state formula.
A formula is \define{classical} iff it is made from atomic propositions and the 
Boolean connectives $\neg$ and $\land$ only (no occurrences of $\PQ P$,
$\TO X$ or $\TO U$). 
We write $\opFalse$ as a shorthand for $\neg \opTrue$.

We are going to define the semantics of PCTL* in terms of Markov chains in the
usual way. That is, given an MDP 
$\calM = (S, s_\init, A, P, L)$ and a policy $\pi$, we fix the Markov chain
$\calM_\pi = (S^\pi, s_\init^\pi, A, P^\pi, L^\pi)$ as described above.
In fact, in our case the policies of interest will  always be 
 finite-memory policies.

Let $\Runs^{\calM_\pi}$ denote the set of runs of $\calM_\pi$.
For state formulas~$\phi$ and states $s \in S^\pi$ define the satisfaction relation
\define{$\calM_\pi, s \models \phi$}, briefly \define{$s \models \phi$}, as follows:
  \begin{align*}
    s & \models \opTrue &  s &\models \phi_1 \land \phi_2 \text{\  iff\  $s \models \phi_1$ and $s \models \phi_2$ } \\
    s & \models a \text{\  iff\   $a\in L(s)$} &
    s &\models \neg\phi \text{\  iff\  $s \not\models \phi$ } \\
    s &\models \P_{\sim z} \psi \text{\  iff\ } 
          \mathop{\Pr^{\calM_\pi}}(\{r \in \textstyle{\Runs^{\calM_\pi}}(s) \mid \calM_\pi, r \models \psi\}) \sim z 
        \span\span
  \end{align*}
For path formulas $\psi$ and runs $r \in \Runs^{\calM_\pi}$ 
define the satisfaction relation \define{$\calM_\pi, r \models \psi$}, briefly \define{$r \models \psi$} as follows:
  \begin{align*}
    r & \models \phi \text{\  iff\  $\first(r) \models \phi$} & 
    r & \models \psi_1 \land \psi_2 \text{\  iff\  $r \models \psi_1$ and $r \models \psi_2$ }\\
    r & \models \neg\psi \text{\  iff\  $r \not\models \psi$} & 
    r & \models \X \psi \text{\  iff\  $r[2] \models \psi$}\\ 
    r & \models \psi_1 \U \psi_2 \text{\  iff\  exists $n \ge 1$ s.t.\ $r[n] \models \psi_2$
               and  $r[m] \models \psi_1$ for all $1 \le m < n$ }\span\span
  \end{align*}

  We say that an MDP $\calM$ and a policy $\pi$ satisfies $\phi$, written as
    $\calM,\pi \models \phi$ iff $\calM_\pi, s_\init^\pi \models \phi$.

    In this paper we focus on the problem of synthesizing only the $\act$-component of an otherwise fully
specified finite memory policy. More formally:
\begin{definition}[Policy Synthesis Problem]
    \label{def:policy-synthesis-problem}
    Let $\calM = (S, s_\init, A, P, L)$ be an MDP, and
    ${\pi_\fin} = (M, \start, \Delta, \cdot)$ be
    a partially specified finite-memory policy with $\act$ unspecified. 
    Given state formula $\phi$, 
    find $\act$ s.th.\ $\calM, {\pi_\fin} \models \phi$ if it exists, otherwise report failure. 
\end{definition}

\subsection{Useful Facts About  PCTL* Operators}
Next we summarize some well-known or easy-to-prove facts about PCTL* operators.
By the \define{expansion laws} for the $\PQ U$-operator we mean the following equivalences:
\begin{align}
  \psi_1 \U \psi_2 & \equiv \psi_2 \vee (\psi_1 \wedge \X (\psi_1 \U \psi_2 )) &
  \neg(\psi_1 \U \psi_2) & \equiv \neg\psi_2 \land (\neg \psi_1 \vee \X \neg(\psi_1 \U \psi_2 )) \label{eq:e} \tag{E}
\end{align}

For ${\sim} \in \{ <, \leq, >, \geq \}$ define the operators $\compl{\sim}$ and $[\sim]$ as follows:
\begin{align*}
{\compl{<}} & ={\ge} &
{\compl{\leq}} & ={>} &
{\compl{>}} & ={\leq} &
{\compl{\ge}} & ={<}  &
[<] & ={>} &
[\leq] & ={\ge} &
[>] & ={<} &
[\ge] & ={\le} 
\end{align*}

Some of the following equivalences cannot be used for ``model checking'' PCTL* (the
left~(\ref{eq:p1}) equivalence, to be specific) where actions are implicitly
universally quantified. In the context of Markov Chains, which we implicitly have,
there is no problem:
\begin{align}
  \neg \P_{\sim z} \psi & \equiv \P_{\compl{\sim} z} \psi &
  \P_{\sim z} \neg \psi & \equiv \P_{\mathop{[\sim]} 1-z} \psi \label{eq:p1} \tag{P1} \\
  \P_{\ge 0} \psi & \equiv \opTrue & \P_{> 1} \psi & \equiv \opFalse \label{eq:p2} \tag{P2}\\
  \P_{\le 1} \psi  & \equiv \opTrue & \P_{< 0} \psi & \equiv \opFalse \label{eq:p3} \tag{P3}\\
  \P_{\geq u} \P_{\sim z} \psi & \equiv \P_{\sim z} \psi \quad\text{if $u \neq 0$} &
  \P_{> u} \P_{\sim z} \psi & \equiv \P_{\sim z} \psi \quad\text{if $u\neq 1$} \label{eq:p4} \tag{P4}\\
  \P_{\leq u} \P_{\sim z} \psi & \equiv \P_{\ge 1-u}\P_{\compl{\sim} z} \psi &
  \P_{< u} \P_{\sim z} \psi & \equiv \P_{> 1-u}\P_{\compl{\sim} z} \psi \label{eq:p5} \tag{P5}
\end{align}

\subsection{Nonlinear Programs}
Finally, a \define{(nonlinear) program} is a set $\Gamma$ of constraints of the form $e_1 \bowtie e_2$ where
${\bowtie} \in \{ <, \le, >, \geq, \doteq \}$ and $e_1$ and $e_2$ are arithmetic expressions
comprised of numeric real constants and variables.  The numeric operators are $\{+, -, \cdot, /\}$,
all with their expected meaning (the symbol $\doteq$ is equality).
All variables are implicitly bounded over the range $[0,1]$.
A solver (for nonlinear programs) is a decision procedure that returns a satisfying variable
assignment (a solution) for a given $\Gamma$, and reports unsatisfiability if no solution exists. We
do not further discuss solvers in the rest of this paper, we just assume one as given. Examples of
open source solvers include Ipopt and Couenne.\footnote{\url{http://projects.coin-or.org/}.}

\section{Tableau Calculus}
\label{sec:tableau}
\subsection{Introduction and Overview}
We describe a tableau based algorithm for the policy synthesis problem in
Definition~\ref{def:policy-synthesis-problem}. Hence assume as given an MDP
$\calM = (S, s_\init, A, P, L)$ and a partially specified finite-memory policy
${\pi_\fin} = (M, \start, \Delta, \cdot)$ with $\act$ unspecified.

A \define{labelled formula $\calF$} is of the form $\pair m s:\Psi$ where $\pair m s \in M\times S$ and
$\Psi$ is a possibly empty set of path formulas, interpreted
conjunctively.
When we speak of the \define{probability of $\pair m s:\Psi$} we mean the value of
$\Pr^{\calM_{\pi_\fin}}(\{r \in \Runs^{\calM_{\pi_\fin}}(\pair m s) \mid \calM_{\pi_\fin}, r \models \bigwedge \Psi\})$
for the completed $\pi_\fin$.
For simplicity we also call $\Psi$ a ``formula'' and call $\pair m s$ a \define{policy state}.
A \define{sequent}
is an expression of the form $\Gamma \vdash \calF$ where $\Gamma$ is a program. 

Our algorithm consists of three steps, the first one of which is a tableau construction.
A \define{tableau for $\Gamma \vdash \calF$} is a finite tree whose root is labelled with
$\Gamma \vdash \calF$ and such that every inner node is labelled with the premise of an inference
rule and its children are labelled with the conclusions, in order. If $\Gamma \vdash \calF$ is
the label of an inner node we call $\calF$ the \define{pivot of the node/sequent/inference}.
By a \define{derivation from $\Gamma \vdash \calF$}, denoted
by \define{$\TABLEAU(\Gamma \vdash  \calF)$},
we mean any tableau for $\Gamma \vdash \calF$ obtained by stepwise construction, starting
from a root-node only tree and applying an inference rule to (the leaf of)  every branch as long as possible.
There is one inference rule, the \IR{$P$}-rule, which recursively calls the algorithm itself.
A branch is terminated when no inference rule is applicable, which is exactly the
case when its leaf is labelled by a pseudo-sequent, detailed below. The inference
rules can be applied in any way, subject to only preference constraints.

Given a state formula $\phi$, the algorithm starts with a derivation from
$\Gamma_\init \vdash \calF_\init\ :=\ 
\{x_{\pair {\start(s_\init)} {s_\init}}^{\{\phi\}}\doteq 1\} \vdash
\pair {\start(s_\init)} {s_\init}:\{\phi\}$. (The constraint
$\Gamma_\init$ forces $\phi$ to be ``true''.)  The derivation represents the obligation to derive
a satisfiable extension $\Gamma_\final \supseteq \Gamma_\init$ whose solutions $\sigma$ determine 
the $\act$-component $\act_\sigma$ of $\pi_\fin$ such that
$\calM_{\pi_\fin}, \pair {\start(s_\init)} {s_\init} \models \phi$.
In more detail, $\Gamma_\final$ will contain constraints of the form
$x_{\pair m s}^\alpha \doteq 0$ or $x_{\pair m s}^\alpha > 0$ for the probability of applying
action $\alpha$ in policy state $\pair m s$. Let the \define{policy domain of a program $\Gamma$} be
the set of all policy states $\pair {m} {s} \in M \times S$ such that $x_{\pair m s}^\alpha$ occurs
in $\Gamma$, for some $\alpha$.
This lets us initially define $\act_\sigma(m,s,\alpha) := \sigma(x_{\pair m s}^\alpha)$ for every $\pair m s$
in the policy domain of $\Gamma_\final$. Only for the purpose of satisfying the definition
of finite memory policies, we then make $\act_\sigma$ trivially total by choosing an \emph{arbitrary} distribution
for $\act_\sigma(m,s)$ for all remaining $\pair m s \in M\times S$. (The latter 
are not reachable and hence do not matter.) 
We call \define{$\pi_\fin(\sigma) := (M, \start, \Delta, \act_\sigma)$} the \define{policy completed by $\sigma$}.
%

Similarly, $\Gamma_\final$ contains variables of the form $x_{\pair m s}^\Psi$, and
$\sigma(x_{\pair m s}^\Psi)$ is
the probability of ${\pair m s}:\Psi$ under the policy $\pi_\fin(\sigma)$.
(We
actually need these variable indexed by tableau nodes, see below.)
If $\Psi$
is a state formula  
its value will be 0 or 1, \mbox{encoding truth values.}


Contrary to traditional tableau calculi, the result of the computation -- the extension
$\Gamma_\final$ -- cannot always be obtained in a branch-local way. To explain,
there are two kinds of
branching in our tableaux: \define{don't-know (non-deterministic) branching} and \define{union branching}. The former
is always used for exhaustive case analysis, e.g., whether $x_{\pair m s}^\alpha \doteq 0$ or
$x_{\pair m s}^\alpha > 0$, and the algorithm guesses which alternative to take (cf.\ step
2 below). The latter analyzes the Boolean
structure of the pivot. Unlike as with traditional tableaux, \emph{all} children need to
be expanded, and each fully expanded branch contributes to $\Gamma_\final$.

More precisely, we formalize the synthesis algorithm as a three-step procedure.
\define{Step one} consists in deriving $\TABLEAU(\Gamma_\init \vdash \calF_\init)$.
\define{Step two} consists in removing from the step one tableau every don't-know branching by
retaining exactly one child of the parent node of the don't-know branching, and
deleting all other children and the subtrees below them. This itself is a don't-know
non-deterministic process; it corresponds to going down one branch in traditional
tableau. The result is denoted by $\CHOOSE(T_1)$, where $T_1$
is the step one tableau.
\define{Step three} consists in first building a combined program by taking the
union of the $\Gamma$'s in the leaves of the branches of the step two tableau.  This
program then is extended with a set of constraints by the $\FORCE$ operator. More precisely,
$\FORCE$ing captures the situation when a run reaches a bottom strongly connected
component (BSCC). Any formula is satisfied in a BSCC with probability 0 or 1,  
which can be determined solely by qualitative formula evaluation in the BSCC.
Details are below. For now let us just define
\define{$\GAMMA(T_2) =  \bigcup\, \{ \Gamma \mid \text{$\Gamma\vdash \cdot$ is the leaf of a branch in
$T_2$} \} \cup \FORCE(T_2)$ where $T_2 = \CHOOSE(T_1)$}.

Our main results are the following. See Appendix~\ref{sec:proofs} for proofs.
\begin{theorem}[Soundness]
  \label{th:soundness}
  Let $\calM = (S, s_\init, A, P, L)$ be an MDP,
  $\pi_\fin = (M, \start, \Delta, \cdot)$ be
  a partially specified finite-memory policy with $\act$ unspecified,
  and $\phi$ a state formula. Suppose there is a program
  $\Gamma_\final := \GAMMA(\CHOOSE(\TABLEAU(
\{x_{\pair {\start(s_\init)} {s_\init}}^{\{\phi\}}\doteq 1\} \vdash \pair {\start(s_\init)} {s_\init}:\{\phi\})))$
  such that $\Gamma_\final$ is satisfiable.
  Let $\sigma$ be any solution of $\Gamma_\final$ and $\pi_\fin(\sigma)$ be the policy completed by $\sigma$.
  Then it holds $\calM, \pi_\fin(\sigma) \models \phi$.
\end{theorem}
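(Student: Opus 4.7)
The plan is to prove soundness by induction on the structure of the step-two tableau $T_2 = \CHOOSE(\TABLEAU(\Gamma_\init \vdash \calF_\init))$, with a nested induction on the depth of $\PQ P$-nesting to handle recursive calls made by the $\PQ P$-rule. The key invariant I would establish is: \emph{for every sequent $\Gamma \vdash \pair m s : \Psi$ occurring in $T_2$, and every solution $\sigma$ of $\Gamma_\final$, the value $\sigma(x_{\pair m s}^{\Psi})$ equals the actual probability under the completed policy $\pi_\fin(\sigma)$, namely $\Pr^{\calM_{\pi_\fin(\sigma)}}(\{r \in \Runs^{\calM_{\pi_\fin(\sigma)}}(\pair m s) \mid r \models \bigwedge\Psi\})$.} Once this invariant is proved at the root $\pair{\start(s_\init)}{s_\init}:\{\phi\}$, the constraint $x_{\pair{\start(s_\init)}{s_\init}}^{\{\phi\}}\doteq 1$ from $\Gamma_\init$ forces this probability to be $1$, and since $\phi$ is a state formula (its satisfaction probability is $0$ or $1$, encoding its truth value at $\pair{\start(s_\init)}{s_\init}$), we obtain $\calM,\pi_\fin(\sigma)\models\phi$.

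First, I would fix $\sigma$ and read off $\act_\sigma$ from the values $\sigma(x_{\pair m s}^\alpha)$, obtaining the Markov chain $\calM_{\pi_\fin(\sigma)}$ against which the invariant is interpreted. Then I would verify the invariant rule by rule. For the Boolean decomposition rules on $\Psi$ (conjunction splitting, double-negation, De Morgan), the corresponding constraints in $\Gamma$ simply encode additivity / inclusion–exclusion over the underlying event algebra, so the invariant propagates from children to parent by elementary measure-theoretic identities. For the $\TO X$-rule, the constraint relates $x_{\pair m s}^{\{\TO X\psi\} \cup \Psi'}$ to a sum $\sum_\alpha x_{\pair m s}^\alpha \sum_{t\in\Succ(s,\alpha)} P(t\mid s,\alpha)\cdot x_{\pair{\Delta(m,s)}{t}}^{\{\psi\}\cup\Psi'}$; this is precisely the recursion satisfied by the true probability under $\calM_{\pi_\fin(\sigma)}$, and the invariant lifts by the induction hypothesis on the smaller children. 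For the $\PQ U$-rule, the expansion law (E) gives one unfolding step, producing constraints that are again satisfied by the true probability; the termination of such unfoldings is ensured by the loop check, which is what drives us to the FORCE mechanism below. The $\PQ P$-rule, which launches a fresh derivation to evaluate $\P_{\sim z}\psi$ at $\pair m s$, is handled by the outer induction on $\PQ P$-nesting: the inductive hypothesis applied to the recursive call guarantees that the returned value $\sigma(x_{\pair m s}^{\{\P_{\sim z}\psi\}}) \in \{0,1\}$ matches the truth of $\P_{\sim z}\psi$ at $\pair m s$ in $\calM_{\pi_\fin(\sigma)}$.

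The main obstacle will be the treatment of loop-closed branches together with the $\FORCE$ constraints added by step three. When a branch is terminated by the Reynolds-style loop check, the constraints on its leaf alone do not pin down the probability uniquely, because unrolling an eventuality along a cycle is consistent with both the ``fulfilled'' and the ``postponed forever'' interpretation. Here one must argue that on the almost-surely-reached BSCCs of $\calM_{\pi_\fin(\sigma)}$ restricted to the policy domain, path formulas satisfy the $0/1$ law, so their probability is determined by qualitative (non-probabilistic) satisfaction along BSCC runs; the $\FORCE$ operator is precisely what encodes this qualitative verdict as equalities $x_{\pair m s}^{\Psi}\doteq 0$ or $\doteq 1$ at policy states that $T_2$ identifies as sitting inside a BSCC. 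I would prove two things: (i) any solution of the combined system $\GAMMA(T_2)$ actually does assign $0/1$ values consistent with the qualitative BSCC semantics, using the transition structure imposed by $\act_\sigma$; and (ii) these forced values are the unique fixed-point extension of the interior recursion along the step-two tree, using that probability mass absorbed in a BSCC no longer contributes to the unfolded eventualities. Combining (i) and (ii) closes the invariant precisely at the ambiguous loop-closed leaves and union-branching joins, yielding soundness at the root.
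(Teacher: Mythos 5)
Your overall plan coincides with the paper's own proof strategy: both establish, by induction over the main tableau and the trees generated by recursive $\PQ P$-calls, the invariant that $\sigma(x_{\pair m s}^{\Psi})$ equals $\Pr^{\calM_{\pi_\fin(\sigma)}}(\{r \in \Runs^{\calM_{\pi_\fin(\sigma)}}(\pair m s) \mid r \models \bigwedge \Psi\})$ (with $0/1$ values when $\Psi$ consists of state formulas), verify it rule by rule, and read the theorem off at the root from the constraint $x_{\pair {\start(s_\init)} {s_\init}}^{\{\phi\}}\doteq 1$. So the decomposition is the right one, and the routine rules, the $\TO X$-rule and the $\PQ P$-recursion are treated as the paper treats them.

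The genuine gap is exactly at the step you flag as the main obstacle, and it is not closed by what you write. You assert that ``the $\FORCE$ operator is precisely what encodes this qualitative verdict,'' but that is the claim to be proved: one must show that the purely syntactic criterion $\FORCE$ uses --- whether the subtree rooted at the BSCC root contains a \IR{Yes-Loop} leaf, where yes/no-blocking is defined through bookkeeping of which eventualities $\psi_2$ of formulas $\X (\psi_1 \U \psi_2)$ occur between indistinguishable poised nodes --- correctly decides the qualitative value of $\bigwedge\Psi$ for the runs of the induced Markov chain that enter the corresponding BSCC. The $0/1$ law inside a BSCC only tells you the probability is $0$ or $1$, not which; and your item (ii) about a ``unique fixed-point extension'' misses the point, since inside an ambiguous BSCC the interior constraints deliberately leave the value undetermined and $\FORCE$ fixes it by fiat, so there is nothing to derive from fixed-point uniqueness --- what must be shown is agreement of the forced value with the semantics. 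The paper closes this hole with a substantive argument: it adapts Reynolds' truth lemma for LTL tableaux to show that a \IR{Yes-Loop} guarantees fulfilment of all $\TO X$-eventualities, and, because a given run of the chain need not follow the lasso but only visits every part of the BSCC infinitely often (fairness) and may detour through \IR{No-Loop} segments in between, it needs Lemma~\ref{lemma:uvw} (intermediate poised nodes carry supersets of the repeated formula set) to show such spliced-in detours preserve the formulas the truth lemma relies on; dually, absence of a \IR{Yes-Loop} is shown, via the completeness-side analysis of the loop check, to leave some eventuality unfulfilled on every run, forcing probability $0$. Without an argument of this kind the invariant cannot be established at the ambiguous BSCC roots, so your plan, as written, does not yet constitute a proof of soundness.
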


\begin{theorem}[Completeness]
  \label{th:completeness}
  Let $\calM = (S, s_\init, A, P, L)$ be an MDP,
  $\pi_\fin = (M, \start, \Delta, \act)$ 
  a finite-memory policy, and $\phi$ a state formula. Suppose
  $\calM, \pi_\fin \models \phi$. Then there is a satisfiable program
  $\Gamma_\final := \GAMMA(\CHOOSE(\TABLEAU(\{x_{\pair {\start(s_\init)}
    {s_\init}}^{\{\phi\}}\doteq 1\} \vdash \pair {\start(s_\init)} {s_\init}:\{\phi\})))$
  and a solution $\sigma$  of $\Gamma_\final$ such that $\act_\sigma(m,s,\alpha) = \act(m,s,\alpha)$
  for every pair $\pair m s$ in the policy domain of $\Gamma_\final$.
  Moreover $\calM, \pi_\fin(\sigma) \models \phi$.
\end{theorem}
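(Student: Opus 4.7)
The plan is to prove completeness constructively. Given the satisfying finite-memory policy $\pi_\fin$, I would (i) guide the non-deterministic $\CHOOSE$ step so that the retained branches align with the actual behaviour of $\calM_{\pi_\fin}$, and (ii) exhibit a canonical assignment $\sigma^*$, read off from $\pi_\fin$ itself, which satisfies the resulting program $\Gamma_\final$.

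First, I would define $\sigma^*$ by $\sigma^*(x_{\pair m s}^\alpha) := \act(m,s,\alpha)$ for every action variable, and $\sigma^*(x_{\pair m s}^\Psi) := \Pr^{\calM_{\pi_\fin}}(\{r \in \Runs^{\calM_{\pi_\fin}}(\pair m s) \mid r \models \bigwedge \Psi\})$ for every formula variable. The root constraint $x_{\pair {\start(s_\init)}{s_\init}}^{\{\phi\}} \doteq 1$ is then satisfied because $\calM, \pi_\fin \models \phi$ forces the true probability of the state formula $\phi$ to be $1$, and for any pair $\pair m s$ in the policy domain $\act_{\sigma^*}$ coincides with $\act$ by construction.

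Second, I would fix the $\CHOOSE$ decisions so that every don't-know split goes to the branch consistent with the truth under $\pi_\fin$: for case splits of the form $x^\alpha_{\pair m s} \doteq 0$ versus $x^\alpha_{\pair m s} > 0$, follow the support of $\act$; for case splits on whether a sub-formula $\P_{\sim z}\psi$ holds at some reached policy state, follow the satisfaction status in $\calM_{\pi_\fin}$. With these choices pinned down, I would show by induction on the retained tableau that for every sequent $\Gamma \vdash \pair m s:\Psi$ along the retained branches, $\sigma^*$ satisfies $\Gamma$. The heart of this is a rule-by-rule check that each conclusion constraint is a valid identity or inequality once the variables $x^\Psi_{\pair m s}$ are read as the true probabilities from $\sigma^*$; this is essentially the syntactic dual of the semantic correctness used for soundness. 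Nested $\PQ P$-operators are handled by an outer induction on formula nesting depth, since the \IR{$P$}-rule recursively invokes $\TABLEAU$ on a strict sub-formula and the inductive hypothesis then supplies a satisfying sub-assignment that coheres with $\sigma^*$.

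The hard part will be the loop-termination step and the accompanying $\FORCE$ operator, especially for $\TO{U}$- and negated-$\TO{U}$-formulas whose satisfaction is tied to long-run behaviour. When a branch closes via Reynolds-style loop detection, the algorithm identifies a candidate bottom strongly connected component (BSCC) of $\calM_{\pi_\fin(\sigma^*)}$ whose identity is determined by the already-fixed $\Delta$ together with the support of $\act_{\sigma^*}=\act$; $\FORCE$ then adds constraints pinning path-formula variables to qualitative $0/1$ values. I would justify consistency of these constraints with $\sigma^*$ via two standard Markov-chain facts: almost every run is eventually absorbed into some BSCC, and on a BSCC any $\PQ U$-formula has probability $0$ or $1$, determined by whether its sub-formulas are almost-surely satisfied inside the component. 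Combining this BSCC analysis with the rule-by-rule invariant established along each branch yields that $\sigma^*$ satisfies the combined program $\Gamma_\final$. The concluding statement $\calM, \pi_\fin(\sigma^*) \models \phi$ then follows either directly from the definition of $\sigma^*$, or, more cleanly, by invoking soundness (Theorem~\ref{th:soundness}) on the witness just constructed.
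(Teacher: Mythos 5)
Your proposal follows essentially the same route as the paper's proof: define the candidate solution by reading action variables off $\act$ and formula variables off the true probabilities in $\calM_{\pi_\fin}$, steer every $\CHOOSE$ (the \IR{A}-splits by the support of $\act$, the \IR{$P$}-splits by satisfaction in the Markov chain), verify rule-by-rule that this assignment satisfies the accumulated constraints (the paper phrases this as an invariant maintained over the main tree and the recursively spawned subtrees), and dispose of the $\FORCE$ constraints by identifying the tableau BSCC with a BSCC of the induced Markov chain where path formulas have probability $0$ or $1$. The only cosmetic differences are your outer induction on $\PQ P$-nesting depth versus the paper's simultaneous invariant over all sub-derivations, and your optional appeal to soundness for the final claim, which the paper instead reads off its invariant.
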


\subsection{Inference Rules}
There are two kinds of inference rules, giving two kinds of branching:
{
\begin{gather*}
  \infrule[\textit{Name}]{
     \Gamma \vdash \pair{m}{s}: \Psi
  }{
    \Gamma_\text{\textit{left}} \vdash \pair{m}{s}: \Psi \qquad \Gamma_\text{\textit{right}} \vdash \pair{m}{s}: \Psi
  } \infrulecond{\text{if \textit{condition}}}
  \tag{Don't-know branching}
\end{gather*}}\ignorespaces
The pivot in the premise is always carried over into both conclusions. Only the constraint $\Gamma$
is modified into $\Gamma_\text{\textit{left}} \supseteq \Gamma$ and $\Gamma_\text{\textit{right}} \supseteq \Gamma$,
respectively, for an exhaustive case analysis.
{
\begin{gather*}
\infrule[\textit{Name}]{
    \Gamma  \vdash \pair{m}{s}: \Psi
  }{
    \Gamma_1 \vdash \pair{m_1}{s_1}: \Psi_1  \quad \cup \quad \cdots \quad \cup \quad
      \Gamma_n \vdash \pair{m_n}{s_n}: \Psi_n
  }\infrulecond{\text{if \textit{condition}}\  (n \ge 1)}
  \tag{Union branching}
\end{gather*}}\ignorespaces
All union branching rules satisfy
$\Gamma_i \supseteq \Gamma$, and $\pair{m_i}{s_i} = \pair{m}{s}$ or
$\pair{m_i}{s_i} = \pair{\Delta(m,s)}{t}$ for some state $t$.
The $\cup$-symbol is decoration for distinguishing the two kinds of
branching but has no meaning beyond that.
Union branching stands for the union of the runs from $\pair{m_i}{s_i}$ satisfying
$\Psi_i$, and computing its probability requires to develop
\emph{all} $n$ child nodes.

We need to clarify a technical add-on.
Let $u$ be the tableau node with the premise pivot $\pair{m}{s}: \Psi$.
An union branching inference extends $u$ with children nodes, say, $u_1,\ldots,u_n$, with conclusion pivots
$\pair{m_i}{s_i}: \Psi_i$. The program $\Gamma_n$ will contain a constraint that makes a
variable $(x_u)_{\pair{m}{s}}^\Psi$ for the premise dependent on all variables
$(x_{u_i})_{\pair{m_i}{s_i}}^{\Psi_i}$ for the
respective conclusions. \emph{This is a key invariant and is preserved by
all inference rules.}
In order to lighten the notation, however, we usually drop the
variable's index, leaving the node implicit. For instance, we write
$x_{\pair{m}{s}}^\Psi$ instead of $(x_u)_{\pair{m}{s}}^\Psi$.  The
  index $u$ is needed for not
inadvertently identifying the same pivot at different points in the symbolic
execution of a run. Fresh names $x,y,z,\ldots$ for the
variables would do as well.

Most unary union branching rules have a premise $\Gamma \vdash \pair{m}{s}:  \{\psi \} \uplus \Psi$ and
the conclusion is $\Gamma, \gammaone \vdash \pair{m}{s}:  \Psi'$, for some $\Psi'$.
The pivot is specified by pattern matching, where $\uplus$ is disjoint union, and
$\gammaone$ is a macro that expands to
 $x_{\pair{m}{s}}^{\{\psi \} \uplus \Psi} \doteq x_{\pair{m}{s}}^{\Psi'}$.


Other inference rules derive pseudo-sequents of the form
$\Gamma \vdash \Cross$,  $\Gamma \vdash \Checkmark$, $\Gamma \vdash \textsf{Yes-Loop}$ and $\Gamma \vdash \textsf{No-Loop}$.
They indicate that the probability of the pivot is 0, 1, or that a loop situation
arises that may need further analysis. Pseudo-sequents are always leaves.

Now we turn to the concrete rules. They are listed in decreasing order
of preference.
\paragraph{Rules for classical formulas}
{
  \begin{gather*}
\infrule[$\top$]{
\Gamma \vdash \pair{m}{s}:  \{\psi \} \uplus \Psi
}{ 
\Gamma, \gammaone \vdash \pair{m}{s}:  \Psi
} \infrulecond{\left\{\pbox{ if $\psi$ is clas-\\ sical and\\ $L(s) \models \psi$}\right.}
\quad\infrule[\Cross]{
\Gamma \vdash \pair{m}{s}:  \{\psi\} \uplus \Psi
}{ 
\Gamma, x_{\pair{m}{s}}^{\{\psi\}\uplus\Psi} \doteq 0 \vdash \Cross
} \infrulecond{\left\{\pbox{ if $\psi$ is clas-\\ sical and\\ $L(s) \not\models \psi$}\right.}
\\[0.1ex]
\infrule[\Checkmark]{
\Gamma \vdash \pair{m}{s}: \emptyset 
}{ 
\Gamma, x_{\pair{m}{s}}^\emptyset \doteq 1  \vdash \Checkmark
} \qquad
\infrule[$\neg\neg$]{
\Gamma \vdash \pair{m}{s}:  \{\neg\neg\psi\}\uplus \Psi
}{ 
\Gamma, \gammaone \vdash \pair{m}{s}:  \{\psi\} \cup \Psi
} \\[0.1ex]
\infrule[$\neg\PQ P$]{
\Gamma \vdash \pair{m}{s}:  \{\neg \P_{\sim z} \psi\} \uplus \Psi
}{ 
\Gamma, \gammaone \vdash \pair{m}{s}:  \{\P_{\compl{\sim} z} \psi\} \cup \Psi
}\qquad
\infrule[$\PQ P\neg$]{
\Gamma \vdash \pair{m}{s}:  \{ \P_{\sim z} \neg \psi\} \uplus \Psi
}{ 
\Gamma, \gammaone \vdash \pair{m}{s}:  \{\P_{\mathop{[\sim]} 1-z} \psi\} \cup \Psi
} 
\end{gather*}}
These are rules for evaluating classical formulas and for negation.
The \IR{$\Cross$} rule terminates the branch and assigns a probability of 0 to the premise pivot, as no run
from $\pair m s$ satisfies (the conjunction of) $\{\psi\} \uplus \Psi$, as $\psi$ is false in
$s$.  A similar reasoning applies to the \IR{$\top$} and \IR{$\Checkmark$} rules.
The $\neg \PQ P$ and $\PQ P\neg$ rules are justified by law (\ref{eq:p1}).
The $\PQ P\neg$ rule is needed for removing negation between
$\PQ P$-formulas as in
$\P_{\sim z} \neg \P_{\sim v} \psi$. 
%
%
%
%
\paragraph{Rules for conjunctions}
{
\begin{gather*}
\infrule[$\land$]{
\Gamma \vdash \pair{m}{s}:  \{\psi_1 \land \psi_2\} \uplus \Psi
}{ 
\Gamma, \gammaone \vdash \pair{m}{s}:  \{\psi_1, \psi_2\} \cup \Psi
} 
\\[0.1ex]
\infrule[$\neg\land$]{
\Gamma \vdash \pair{m}{s}: \{\neg(\psi_1 \land \psi_2)\} \uplus \Psi
}{ 
\Gamma \vdash \pair{m}{s}: \{\neg\psi_1\} \cup \Psi \quad \cup \quad \Gamma, \gamma \vdash \pair{m}{s}: \{\psi_1,\neg\psi_2\}\cup \Psi
}\\[0.1ex]
\text{where $\gamma\ =\  x_{\pair{m}{s}}^{\{\neg(\psi_1 \land \psi_2)\} \uplus \Psi } \doteq
x_{\pair{m}{s}}^{\{\neg\psi_1\} \cup \Psi} + x_{\pair{m}{s}}^{\{\psi_1,\neg\psi_2\}\cup \Psi}$}
\end{gather*}}
These are rules for conjunction. Not both $\psi_1$ and $\psi_2$ can be classical by
preference of the \IR{$\top$} and \IR{$\Cross$} rules.
The \IR{$\land$} rule is obvious with the conjunctive reading of formula sets.
The \IR{$\neg\land$} rule deals, essentially, with the disjunction $\neg\psi_1 \vee \neg\psi_2$,
which requires splitting. However, unlike to the classical logic case, $\neg\psi_1 \vee \neg\psi_2$
represents the union of the runs from $s$ satisfying $\neg\psi_1$ and the runs
from $s$ satisfying $\neg\psi_2$. As these sets may overlap the rule
works with a \emph{disjoint} union by taking $\neg\psi_1$ on
the one side, and $\psi_1 \land \neg\psi_2$ on the other side so that it is correct to add their
probabilities up in $\gamma$. 
%
%
%
\paragraph{Rule for simplification of $\PQ P$-formulas}
{
\begin{gather*}
\infrule[$\PQ P1$]{
\Gamma \vdash \pair{m}{s}:  \{\P_{\sim z} \psi\}\uplus \Psi
}{ 
\Gamma, \gammaone \vdash \pair{m}{s}:  \{\psi'\} \cup \Psi
} \infrulecond{\left\{\pbox{if $\P_{\sim z} \psi$ is the left hand side of an equivalence \\
      (\ref{eq:p2})-(\ref{eq:p5})  and $\psi'$ is its right hand side}\right.}
\\[0.1ex]
\infrule[$\PQ P2$]{
\Gamma \vdash \pair{m}{s}:  \{\P_{\sim z} \psi \} \uplus \Psi
}{ 
\Gamma, \gammaone \vdash \pair{m}{s}:  \{\psi\} \cup \Psi
} \infrulecond{\text{ if \emph{see text}}}
\\[0.1ex]
\infrule[$\PQ P3$]{
\Gamma \vdash \pair{m}{s}:  \{\P_{\sim z} \psi\} \uplus \Psi
}{ 
\Gamma, \gammaone \vdash \pair{m}{s}:  \{\neg\psi\} \cup \Psi
} \infrulecond{\text{ if \emph{see text}}}
\end{gather*}}
These are rules for simplifying $\PQ P$-formulas.
The condition in \IR{$\PQ P2$} is ``${\sim} \in \{>, \geq\}$ and  $\psi$ is a state formula'', and
in \IR{$\PQ P3$} it is ``${\sim} \in \{<, \leq\}$ and  $\psi$ is a state formula''.
In the rules \IR{$\PQ P2$} and \IR{$\PQ P3$} trivial cases for $z$ are
excluded by preference of \IR{$\PQ P1$}. Indeed, this preference is even needed for soundness.

The rule \IR{$\PQ P2$} can be explained as follows: suppose we want to know if
$\calM_{\pi_\fin}, \pair m s \models \P_{\sim z} \psi$.
For that we need the probability of the set of runs from
$\pair m s$ that satisfy $\psi$ and compare it with $z$. Because $\psi$ is a \emph{state}
formula this set is comprised of all runs from $s$ if $\calM_{\pi_\fin}, \pair m s \models \psi$, or the empty 
set otherwise, giving it probability 1 or 0, respectively.
With ${\sim} \in \{>, \ge\}$ conclude  $\calM_{\pi_\fin}, \pair m s \models \P_{\sim z} \psi$, or its negation, respectively.
The rule \IR{$\PQ P3$} is justified analogously. The only difference is that ${\sim} \in
\{<,\le\}$ and so the $\P_{\sim z}$ quantifier acts as a negation operator
instead of idempotency.

At this stage, when all rules above have been applied exhaustively to a given branch,
the leaf of that branch must be of the
form $\Gamma   \vdash \pair{m}{s}: \{ \P_{\sim z_1} \psi_1, \ldots, \P_{\sim z_n} \psi_n \}$, for some $n \geq 0$,
where each $\psi_i$ is a non-negated proper path formula.
\paragraph{Rules for $\PQ P$-formulas}
{
  \begin{gather*}
\infrule[$\PQ P$]{
\Gamma \vdash \pair{m}{s}:  \Psi
}{ 
\Gamma,\Gamma',\gammaleft   \vdash \pair{m}{s}:  \Psi \qquad
\Gamma,\Gamma',\gammaright   \vdash \pair{m}{s}:  \Psi
}\infrulecond{\left\{\pbox{ if $\P_{\sim z} \psi\ \in\ \Psi$, and \\$\gammaleft \notin \Gamma$ and $\gammaright \notin \Gamma$}\right.}\\[1ex]
\infrule[$\PQ{P} \top$]{
\Gamma \vdash \pair{m}{s}:  \{\P_{\sim z} \psi\} \uplus \Psi
}{ 
\Gamma, \gammaone \vdash \pair{m}{s}:  \Psi
} \infrulecond{\text{ if $\gammaleft  \ \in\ \Gamma$}}
\\[0.1ex]
  \infrule[$\PQ{P}\Cross$]{
\Gamma \vdash \pair{m}{s}:  \{\P_{\sim z} \psi\} \uplus \Psi
}{ 
\Gamma, x_{\pair{m}{s}}^{\{\P_{\sim z} \psi\} \uplus \Psi} \doteq 0 \vdash \Cross
} \infrulecond{\text{ if $\gammaright \ \in\ \Gamma$}}\\
\text{where }
\begin{aligned}[t]
  \Gamma' & = \GAMMA(\CHOOSE(\TABLEAU(\emptyset \vdash \pair{m}{s}: \{\psi\})))\text{,} \\
  \gammaleft & =  x_{\pair{m}{s}}^{\{\psi\}} \sim z \text{, and }
  \gammaright =  x_{\pair{m}{s}}^{\{\psi\}} \compl{\sim} z 
\end{aligned}
\end{gather*}}\ignorespaces
Unlike classical formulas, $\PQ P$-formulas cannot be evaluated in a
state, because their truth value depends on the solution of the program $\Gamma_\final$.
The \IR{$\PQ P$}  rule analyzes $\P_{\sim z} \psi$ in a deferred way by first getting a constraint
$x_{\pair{m}{s}}^{\{\psi\}} \doteq e$, for some expression $e$,
for the probability of $\pair{m}{s}: \{\psi\}$ by a
recursive call.
This call is not needed if $\Gamma$ already determines a truth value for
$\P_{\sim z} \psi$ because of $\gammaleft \in \Gamma$ or $\gammaright \in \Gamma$.
These tests are done modulo node labels of variables, i.e.,
$(x_u)_{\pair{m}{s}}^{\{\psi\}}$ and
$(x_v)_{\pair{m}{s}}^{\{\psi\}}$ are considered equal for any $u,v$.
Because the value of $e$ is not known at the time of the
inference, the $\PQ P$ rule don't-know
non-deterministically branches out into whether $x_{\pair{m}{s}}^{\{\psi\}} \sim z$ holds or not, as per the 
constraints $\gammaleft$ and $\gammaright$.
The \IR{$\PQ{P} \top$} and \IR{$\PQ{P}\Cross$} rules then lift the corresponding case to the
evaluation of $\P_{\sim z} \psi$, which is possible now thanks to
$\gammaleft$ or $\gammaright$. 

Observe the analogy between these
rules and their counterparts \IR{$\top$} and \IR{$\Cross$} for classical formulas.
Note that the rules \IR{$\PQ P$}, \IR{$\PQ P\top$} and \IR{$\PQ P\Cross$} cannot be combined into
one, because $\gammaleft$ or $\gammaright$ could have been added earlier,
further above in the branch, or in a recursive call.  In this case only
\IR{$\PQ{P} \top$}/\IR{$\PQ{P}\Cross$} can applied.

At this stage, in a leaf $\Gamma \vdash \pair{m}{s}:\Psi$ the set $\Psi$ cannot contain any 
state formulas, as  they would all be eliminated by the inference rules
above; all formulas in $\Psi$ now are possibly negated $\TO X$-formulas or $\TO U$-formulas.
\paragraph{Rules for $\PQ U$-formulas}
{
\begin{gather*}
\infrule[$\TO U$]{
\Gamma \vdash \pair{m}{s}: \{\psi_1 \U \psi_2\} \uplus \Psi
}{ 
\Gamma \vdash \pair{m}{s}: \{ \psi_2 \} \cup \Psi \qquad \cup \qquad \Gamma,\gamma \vdash \pair{m}{s}:\{ \psi_1, \neg \psi_2,\, \X (\psi_1 \U \psi_2)\} \cup \Psi
}
\\[0.1ex]
\text{where $\gamma\ =\ x_{\pair{m}{s}}^{\{\psi_1 \U \psi_2\} \uplus \Psi} \doteq x_{\pair{m}{s}}^{\{\psi_2\} \cup \Psi} + x_{\pair{m}{s}}^{\{\psi_1, \neg\psi_2, \X (\psi_1 \U \psi_2) \} \cup \Psi}$}
\\[2ex]
\infrule[$\neg\TO U$]{
\Gamma \vdash \pair{m}{s}: \{\neg(\psi_1 \U \psi_2) \} \uplus \Psi
}{ 
\Gamma  \vdash \pair{m}{s}: \{ \neg\psi_1, \neg\psi_2 \} \cup \Psi\quad\cup\quad \Gamma,\gamma \vdash \pair{m}{s}: \{\psi_1, \neg \psi_2, \X \neg(\psi_1 \U \psi_2) \} \cup \Psi
}
\\[0.1ex]
\text{where $\gamma\ =\ x_{\pair{m}{s}}^{\{\neg(\psi_1 \U \psi_2)\} \uplus \Psi} \doteq x_{\pair{m}{s}}^{\{\neg\psi_1,\neg\psi_2\} \cup \Psi} + x_{\pair{m}{s}}^{\{\psi_1, \neg\psi_2, \X \neg(\psi_1 \U \psi_2) \} \cup \Psi}$}
\end{gather*}}\ignorespaces
These are expansion rules for $\TO U$-formulas.
The standard expansion law is $\psi_1 \U \psi_2  \equiv \psi_2 \vee (\psi_1 \wedge \X (\psi_1 \U \psi_2 ))$. As with the
\IR{$\neg\land$} rule, the disjunction in the expanded formula needs to be disjoint
by taking 
$\psi_2 \vee (\psi_1 \land \neg \psi_2 \wedge \X (\psi_1 \U \psi_2 ))$ instead. Similarly for \IR{$\neg\TO U$}.
\paragraph{Rule for $\neg\TO X$}
{
\begin{gather*}
\infrule[$\neg\TO X$]{
\Gamma \vdash \pair{m}{s} :  \{\neg \X \psi\} \uplus \Psi 
}{ 
\Gamma,\gammaone \vdash \pair{m}{s} :  \{ \X \neg\psi \} \cup \Psi
}
\end{gather*}}\ignorespaces
The \IR{$\neg \TO X$} rule is obvious.

At this stage, if $\Gamma \vdash \pair{m}{s}:\Psi$ is a leaf
sequent then $\Psi$ is of the form $\{\X \psi_1, \ldots, \X \psi_n \}$, for some
$n \ge 1$.  This is an important configuration that justifies a name: we say that a
labelled formula $\pair{m}{s}:\Psi$, a sequent $\Gamma \vdash \pair{m}{s}:\Psi$ or a node labelled with
$\Gamma \vdash \pair{m}{s}:\Psi$ is \define{poised} if $\Psi$ is of the form
$\{\X \psi_1,\ldots, \X \psi_n\}$ where $n \ge 1$. (The notion ``poised'' is taken
  from~\cite{Reynolds:LTL-Tableaux:GandALF:2016}.)
A poised $\pair{m}{s} : \{\X \psi_1, \ldots, \X \psi_n \}$ will be expanded by transition
into the successor states of $s$ by using enabled actions $\alpha \in A(s)$. 
That some $\alpha$ is enabled does not, however, preclude a
policy with $\act_\sigma(m,s,\alpha) = 0$. The rule \IR{A} makes a guess whether this is the case
or not:
%
%
%
\paragraph{Rules for prescribing actions}
{
\begin{gather*}
 \infrule[A]{
\Gamma \vdash \pair{m}{s}: \Psi
}{ 
\Gamma,\gammaleft \vdash \pair{m}{s}: \Psi \qquad \Gamma,\gammaright \vdash \pair{m}{s}: \Psi
} \infrulecond{\left\{\pbox{ if $\Gamma \vdash \pair{m}{s}:\Psi$ is poised, \\
      $\alpha \in A(s)$,  $\gammaleft \notin \Gamma$ and $\gammaright \notin \Gamma$}\right.}\\[1ex]
\text{where
  $\begin{aligned}[t]
    \gammaleft\ &=  x_{\pair{m}{s}}^\alpha \doteq 0 \text{ and }
    \gammaright\ =  x_{\pair{m}{s}}^\alpha > 0
  \end{aligned}$
  }
\end{gather*}}\ignorespaces
With a minor modification we get a calculus for \emph{deterministic} policies. It
only requires to re-define $\gammaright$ as $\gammaright =\ x_{\pair{m}{s}}^\alpha \doteq 1$.
As a benefit the program $\Gamma_\final$ will be linear.

After the \IR{A} rule has been applied exhaustively, for each $\alpha \in A(s)$ either
$x_{\pair{m}{s}}^\alpha > 0 \in \Gamma$ or $x_{\pair{m}{s}}^\alpha \doteq 0 \in \Gamma$.  If
$x_{\pair{m}{s}}^\alpha > 0 \in \Gamma$ we say that \define{$\alpha$ is prescribed in $\pair{m}{s}$
by $\Gamma$} and define \define{$\Prescribed(\pair{m}{s},\Gamma) = \{ \alpha \mid
x_{\pair{m}{s}}^\alpha > 0 \in \Gamma \}$}.

The set of prescribed actions in a policy state determines the
$\Succ$-relation  of the Markov chain under construction.  To get the required distribution over
enabled actions, it suffices to enforce a distribution over prescribed actions, with
this inference rule:
{
\begin{gather*}
 \infrule[$\Prescribed$]{
\Gamma \vdash \pair{m}{s}: \Psi
}{ 
\Gamma,  \gamma_{\pair m s}^\alpha \vdash \pair{m}{s}: \Psi
} \infrulecond{\left\{\pbox{ if $\Gamma \vdash \pair{m}{s}:\Psi$ is poised, \\
      $\alpha \in A(s)$ and $\gamma_{\pair m s}^\alpha \notin \Gamma$}\right.}\\[1ex]
\text{where   $\gamma_{\pair m s}^\alpha = \Sigma_{\alpha \in \Prescribed(\pair{m}{s},\Gamma)}\, x_{\pair{m}{s}}^\alpha \doteq 1$}
\end{gather*}}\ignorespaces
If the $\CHOOSE$ operator in step two selects the leftmost branch among the
\IR{A}-inferences then $\Gamma_\final$ contains $x_{\pair{m}{s}}^\alpha \doteq 0$, for all $\alpha \in A(s)$.
This is inconsistent with the constraint introduced by the
\IR{$\Prescribed$}-inference, corresponding to the fact that runs containing  $\pair
m s$ in this case do not  exist.
\subsection*{Blocking}
We are now turning to a ``loop check'' which is essential for termination, by, essentially,
blocking the expansion of certain states into successor states that do not mark progress.
For that, we need some more concepts.  For two nodes $u$ and $v$ in a branch we say that \define{$u$ is an ancestor of
$v$} and write $u \le v$ if $u = v$ or $u$ is closer to the root than $v$. An ancestor is
\define{proper}, written as $u < v$,  if $u \le v$ but $u \neq v$.
We say that two sequents $\Gamma_1\vdash\calF_1$ and $\Gamma_2\vdash\calF_2$ are
\define{indistinguishable} iff
$\calF_1 = \calF_2$, i.e., they differ only
in their $\Gamma$-components. Two nodes $u$ and $v$ are \define{indistinguishable} iff
their sequents are. We write $\Psi_u$ to denote the formula component of $u$'s label,
i.e., to say that the label is of the form $\Gamma \vdash \pair m s: \Psi_u$; similarly for
$\calF_u$ to denote $u$'s labelled formula.



\begin{definition}[Blocking]
  Let $w$ be a poised leaf and $v < w$ an ancestor node.
  If (i) $v$ and $w$ are indistinguishable, and (ii) for every $\TO X$-eventuality
  $\X (\psi_1 \U \psi_2)$ in $\Psi_v$ there is a node $x$ with $v < x \le w$ such that
  $\psi_2 \in \Psi_x$ then $w$ is \define{yes-blocked by $v$}.
    If there is an ancestor node $u < v$ such that
    (i) $u$ is indistinguishable from $v$ and $v$ is indistinguishable from
      $w$ (and hence $u$ is indistinguishable from $w$), and
      (ii) for every $\TO X$-eventuality $\X (\psi_1 \U \psi_2)$ in $\Psi_u$, if there is a
      node $x$ with $\psi_2 \in \Psi_x$ and $v < x \le w$ then there is a node $y$ with
      $\psi_2 \in \Psi_y$ and $u < y \le v$,
    then $w$ is \define{no-blocked by $u$}. 

    When we say that a sequent is yes/no-blocked we mean that its node is yes/no-blocked.
  \end{definition}
In the yes-blocking case all $\TO X$-eventualities in $\Psi_v$
become satisfied along the way from $v$ to $w$. This is why $w$ represents a success
case.
In the no-blocking case some $\TO X$-eventualities in $\Psi_v$ may have been satisfied
along the way from $u$ to $v$, but  not all, as this would be a yes-blocking
instead. Moreover, no progress has been made along the way from $v$ to $w$ for
satisfying the missing $\TO X$-eventualities. This is why $w$ represents a failure case.
The blocking scheme is adapted from~\cite{Reynolds:LTL-Tableaux:GandALF:2016} for LTL
satisfiability to our probabilistic case.
See~\cite{Reynolds:LTLTableau:long:2016,Reynolds:LTL-Tableaux:GandALF:2016} for
more explanations and examples, which are instructive also for its usage in our
framework.

Blocking is used in the following inference rules, collectively called the \IR{Loop}
rules. The node $v$ there is an
ancestor node of the leaf the rule is applied to.
{
\begin{gather*}
 \infrule[Yes-Loop]{
   \Gamma \vdash \pair{m}{s}:\Psi
}{ 
  \Gamma, x_{\pair m s}^\Psi \doteq (x_v)_{\pair m s}^\Psi  \vdash \textsf{Yes-Loop}
} \infrulecond{\text{ if $\Gamma \vdash \pair{m}{s}:\Psi$ is yes-blocked by $v$}}\\[-1mm]
 \infrule[No-Loop]{
  \Gamma \vdash \pair{m}{s}:\Psi
}{ 
  \Gamma, x_{\pair m s}^\Psi \doteq (x_v)_{\pair m s}^\Psi \vdash \textsf{No-Loop}
} \infrulecond{\text{ if $\Gamma \vdash \pair{m}{s}:\Psi$ is no-blocked by $v$}}
\end{gather*}}\ignorespaces
In either case, if $v$ is indistinguishable from $w$ then the probability of $\calF_v$
and $\calF_w$ are exactly the same, just because $\calF_v = \calF_w$.
This justifies adding $x_{\pair m s}^\Psi \doteq (x_v)_{\pair m s}^\Psi$.


The \IR{Loop} rules have a side-effect that we do not formalize: they add a link from
the conclusion node (the new leaf node) to the blocking node $v$, called the
\define{backlink}.  It turns the tableau into a graph that is no longer a tree.
The backlinks are used only for reachability analysis in step three of
the algorithm.
Figure~\ref{fig:bssc} has a graphical depiction. 

By preference of inference rules, the \IR{$\TO X$} rule introduced next can be
applied only if a \IR{Loop} rule does not apply.  The \IR{Loop} rules are at the core
of the termination argument.

This argument is standard for calculi based on
  formula expansion, as embodied in the \IR{$\TO U$} and \IR{$\neg\TO U$} rules: the sets
  of formulas obtainable by these rules is a subset of an a priori determined
  \emph{finite} set of formulas. This set consists of all subformulas of the given
  formula closed under negation and other operators. Any infinite branch hence would
  have to repeat one of these sets infinitely often, which is impossible with the
  \IR{loop} rules. Moreover, the state set $S$ and
  the mode set $M$ are finite and so the other rules do not cause problems either.  

For economy of notation, 
when $\Psi = \{\psi_1,\ldots,\psi_n\}$, for some $\psi_1,\ldots,\psi_n$ and $n > 0$, let
\define{$\X \Psi$} denote the set $\{\X \psi_1,\ldots, \X \psi_n\}$.
{
\begin{gather*}
 \infrule[$\TO X$]{
    \Gamma \vdash \pair{m}{s}: \X \Psi
}{ 
  \Gamma \vdash \pair{m'}{t_1}: \Psi \quad \cup \quad \cdots \quad \cup \quad \Gamma \vdash
  \pair{m'}{t_{k-1}}: \Psi  \quad \cup \quad \Gamma,\gamma_1 \vdash \pair{m'}{t_k}: \Psi 
}\\[1mm]
\text{where}
\begin{aligned}[t]
  m' & \ = \ \Delta(m,s) \\
  \{ t_1,...,t_k \} &\ =\ \textstyle\bigcup_{\alpha \in
  \Prescribed(\pair{m}{s},\Gamma)}\Succ(s,\alpha) \text{ ,  for some $k \ge 0$}\\
  \gamma_1 &\ =\ x_{\pair{m}{s}}^{\X \Psi} \doteq
       \Sigma_{\alpha \in \Prescribed(\pair{m}{s},\Gamma)}\, [x_{\pair{m}{s}}^\alpha \cdot
          (\Sigma_{t \in \Succ(s, \alpha)}\, P(t|s,\alpha) \cdot
           x_{\pair{m'}{t}}^{\Psi})]
      \end{aligned}
\end{gather*}}\ignorespaces
This is the (only) rule for expansion into successor states.

If $u$ is the node the \IR{$\TO X$} rule is applied to and $u_1,\ldots,u_k$ are its
children then each $u_i$ is called an \define{\IR{$\TO X$}-successor (of $u$)}.

The \IR{$\TO X$} rule follows the set of actions prescribed in $\pair{m}{s}$ by
$\Gamma$ through to successor states. This requires summing up the probabilities of
carrying out $\alpha$, as represented by $x_{\pair{m}{s}}^\alpha$, multiplied by the sums of
the successor probabilities weighted by the respective transition probabilities.
This is expressed in the constraint $\gamma_1$.
Only these $k$ successors need to be summed up, as all other, non-prescribed
successors, have probability 0.  

\subsection{Forcing Probabilities}
We are now turning to the $\FORCE$ operator which we left open in step three of the algorithm.
\begin{wrapfigure}[14]{r}{0.4\textwidth}
  \vspace*{-1.5ex}
  \includegraphics[width=0.4\textwidth]{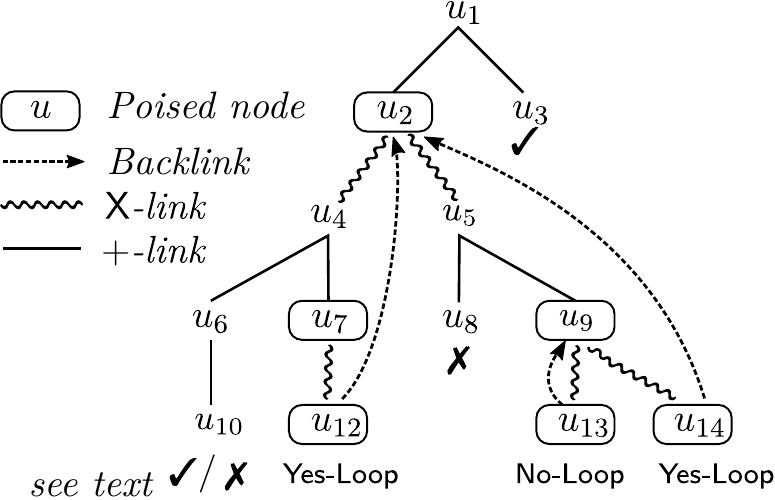}
   \caption{An example tableau $T$ from step 2. The subgraph below $u_2$ is a
     strongly connected component if $u_{10}$ is $\Cross$-ed.}
   \label{fig:bssc}
\end{wrapfigure}
It forces a probability 0 or 1 for certain labelled formulas occurring in a
bottom strongly connected component in a tree
from step two.
The tree %
in the figure to the right helps to illustrate the
concepts introduced in the following.
%

We need some basic notions from graph theory.
A subset $M$ of the nodes $N$ of a given graph is \define{strongly connected} if, for
each pair of 
nodes $u$ and $v$ in $M$, $v$ is reachable from $u$ passing only through states in $M$.
A \define{strongly connected component (SCC)} is a maximally strongly connected set
of nodes (i.e., no superset of it is also strongly connected).
A \define{bottom strongly connected component (BSCC)} is a SCC $M$ from which no
state outside $M$ is reachable from $M$.

Let $T = \CHOOSE(\TABLEAU(\Gamma \vdash \calF))$ be a tree without don't-know branching 
obtained in step~2. 
We wish to take $T$ together with its backlinks as the graph under
consideration and analyse its BSCCs. However, for doing so we cannot take $T$ as it
is. On the one hand, our tableaux describe state transitions introduced by
\IR{$\TO X$} rule applications. Intuitively, these are amenable to BSCC analysis as
one would do for state transition systems. On the other hand, $T$ has interspersed rule
applications for analysing Boolean structure, which distort the state transition
structure. These rule applications have to be taken into account prior to the BSCC
analysis proper. 

For this, we distinguish between $\TO X$-links and $+$-links in $T$. An
$\TO X$-link is an edge between a node and its child if the \IR{$\TO X$} rule
was applied to the node, making its child an $\TO X$-successor, otherwise it is a
$+$-link. (``$+$-link'' because probabilities are summed up.)

Let $u$ be a node in $T$ and \define{$\Subtree_T(u)$}, or just
\define{$\Subtree(u)$}, the subtree of $T$ rooted at $u$ without the backlinks.
We say that $u$ is a \define{0-deadend (in $T$)} if $\Subtree_T(u)$ has no
$\TO X$-links and every leaf in $\Subtree_T(u)$ is $\Cross$-ed.  In a
0-deadend the probabilities all add up to a zero probability for the pivot of $u$. This is
shown by an easy inductive argument. 

\begin{definition}[Ambiguous node]
  \label{def:unambiguous}
  Let $u$ be a node in $T$. We say that \define{$u$ is ambiguous (in $T$)} iff
  (i) $\Subtree_T(u)$ contains no $\Checkmark$-ed leaf, and
  (ii) $\Subtree_T(u)$ contains no $\TO X$-successor 0-deadend node.
We say that \define{$u$ is unambiguous} iff $u$ is not ambiguous.
\end{definition}
The main application of Definition~\ref{def:unambiguous} is when the node $u$ is the
root of a BSCCs, defined below. The probability of $u$'s pivot $\pair{m}{s}:\Psi$ then is not uniquely
determined. This is because expanding $u$
always leads to a cycle, a node with the same pivot, and there is no escape from that
according to conditions (i) or (ii) in Definition~\ref{def:unambiguous}.
In other words, the probability of $\pair{m}{s}:\Psi$ is defined only in terms of
itself.\footnote{In terms of the resulting program, 
  $(x_u)_{\pair m s}^\psi$ is not constrained to any specific value in $[0..1]$.
  This can be shown by ``substituting in'' the equalities in $\Gamma_\final$ for the probabilities
of the pivots in the subtrees  below $u$ and arithmetic simplifications. }

In the figure above, the node $u_1$ is unambiguous because of case (i) in
Definition~\ref{def:unambiguous}. Assuming $u_{10}$ is
$\Checkmark$-ed, the node $u_2$ is unambiguous by case (i). The pivot in
$u_{10}$, then, has probability 1 which is propagated upwards to $u_4$ (and enforces
probability 0 for the pivot of $u_7$). It contributes a non-zero probability to the
transition from $u_2$ to $u_4$ and this way escapes a cycle.
If  $u_{10}$ is $\Cross$-ed, the node $u_2$
is ambiguous.

If case (ii) in Definition~\ref{def:unambiguous} is violated there is an $\TO X$-successor
node whose pivot has probability 0. Because every $\TO X$-link has a non-zero transition
probability, the probabilities obtained through the other $\TO X$-successor nodes add
up to a value strictly less than 1. This also escapes the cycle leading to underspecified programs
(not illustrated above).

Let \define{$0(T)\!=\!\{ w\!\mid\!\text{$w$ is a node in some 0-deadend of $T$} \}$} be
all nodes in all 0-deadends in $T$. In the example,
$0(T)\!=\! \{u_6, u_{10}, u_8 \}$ if $u_{10}$ is $\Cross$-ed and
\mbox{$0(T)\!=\!\{u_8 \}$ if $u_{10}$ is $\Checkmark$-ed.}

Let $u$ be a node in $T$ and \define{$M(u) = \{ w \mid \text{$w$ is a node in $\Subtree(u)$} \}
\setminus 0(T)$}.  That is, $M(u)$ consists of the nodes in the subtree rooted at $u$ after
ignoring the nodes from the 0-deadend subtrees.
In the example
$M(u_2) = \{u_2, u_4, u_5, u_7, u_9, u_{12}, u_{13}, u_{14}\}$ if $u_{10}$ is $\Cross$-ed. 
If $u_{10}$ is $\Checkmark$-ed then $u_6$ and $u_{10}$ have to be added.

We say that \define{$u$ is the root of a BSCC (in $T$)} iff $u$ is poised, ambiguous and
$M(u)$ is a BSCC in $T$ (together with the backlinks).
%
In the example, assume
%
%
that $u_{10}$ is $\Cross$-ed. Then $u_2$ is
poised, ambiguous and the root of a BSCC. In the example, that $M(u_2)$ is a BSCC in
%
%
$T$ is easy to verify.

Now suppose that $u$ is the root of a BSCC with pivot  $\pair{m}{s}: \X \Psi$. This means that the
probability of $\pair{m}{s}: \X \Psi$ is not uniquely determined. This situation then is
fixed by means of the $\FORCE$  operation, generally defined as follows: 
\begin{align*}
  \BSCC(T) & := \{ u \mid \text{$u$ is the root of a BSCC in $T$}\, \}\\[-1mm]
  \FORCE(T) & := \{ (x_u)_{\pair{m}{s}}^{\X \Psi} \doteq \chi  \mid \pbox[t]{ $u \in \BSCC(T)$, and\\
  if some leaf of the subtree rooted at $u$ is a \IR{Yes-Loop} \\
  \qquad then $\chi = 1$ else $\chi = 0$  $\}$}
\end{align*}
That is, $\FORCE$ing removes the ambiguity for the probability of the pivot
$\pair m s : {\X \Psi }$ at the root $u$ of a BSCC by setting it to 1 or to 0.
If $\FORCE$ing adds $(x_u)_{\pair{m}{s}}^{\X \Psi} \doteq 1$ then there is a run that satisfies every
$\TO X$-eventuality in $\X \Psi$, by following the branch to a \textsf{Yes-Loop}.
Because we are looking at a BSCC, for fairness reasons, \emph{every} run will do
this, and infinitely often so, this way giving $\X \Psi$ probability 1. Otherwise, if
there is no \textsf{Yes-Loop}, there is
some $\TO X$-eventuality in $\X \Psi$ that cannot be satisfied, forcing probability 
0.



\section{Conclusions and Future Work}
\label{sec:conclusions}
In this paper we presented a first-of-its kind algorithm for the controller synthesis problem for Markov
Decision Processes whose intended behavior is described by PCTL* formulas. The only
restriction we had to make -- to get decidability -- is to require policies with finite
history. We like to propose that the description of the algorithm 
is material enough for one paper, and so we leave many interesting questions for
future work.

The most pressing theoretical question concerns the exact worst-case
complexity of the algorithm. Related to that, it will be interesting
to specialize and analyze our framework for fragments of PCTL*, such
as probabilistic LTL and CTL or simpler fragments and
restricted classes of policies that might lead to \emph{linear}
programs (and ideally to solving only a polynomial number of such
programs).
For instance, we already mentioned that computing deterministic policies leads to linear programs
in our tableau (see the description of the \IR{\textsf{A}} inference rule how this is done.) 
Moreover, it is well-known that cost-optimal stochastic policies for classes of MDPs with simple constraints bounding the
probability of reaching a goal state can be synthesized in linear time in the
size of the MDP by solving a \emph{single} linear program \cite{altman:99,DBLP:conf/ijcai/DolgovD05}. An interesting
question is how far these simple constraints can be generalised towards PCTL* whilst
remaining in the linear programming framework (see e.g. \cite{DBLP:conf/aaai/SprauelKT14}).


On implementation,
a na{\"\i}ve implementation of the algorithm as presented above would perform
poorly in practice. However, it is easy to exploit some straightforward
observations for better performance. For instance, steps one (tableau construction)
and two (committing to a don't-know non-deterministic choice) should be combined into
one. Then, if a don't know non-deterministic inference rule is carried out the first
time, every subsequent inference with the same rule and pivot can be forced to the same
conclusion, at the time the rule is applied. Otherwise an inconsistent
program would result, which never needs to be searched for.
Regarding space, although all children of a union branching
inference rule need to be expanded, this does not imply they always all need to be kept in
memory simultaneously. Nodes can be expanded in a one-branch-at-a-time
fashion and using a global variable for $\Gamma_\final$ for collecting the programs in the
leaves of the branches \emph{if they do not belong to a bottom strongly connected
component}. Otherwise, the situation is less obvious and we leave it to future work.
Another good source of efficiency improvements
comes from more traditional tableau. It will be mandatory to
exploit techniques such as dependency-directed backtracking, lemma
learning, and early failure checking for search space pruning.

\subsection*{Acknowledgements}

This research was funded by AFOSR grant FA2386-15-1-4015. We would also like to thank
the anonymous reviewers for their constructive and helpful comments.

\section{Additional Operators and Useful Equivalences}
\label{sec:additional-operators}
Additional operators can be defined on top of the temporal operator $\TO U$, as usual. In particular,
$\opFalse := \neg\opTrue$ and
\begin{align*}
  \F \psi & := \opTrue \U \psi & \
   \G \psi & := \neg \F \neg  \psi \quad \text{(${} \equiv \opFalse \R \psi$)} \\
  \psi_1 \R \psi_2 & := \neg(\neg \psi_1 \U \neg \psi_2) &
  \psi_1 \W \psi_2 & := (\psi_1 \U \psi_2) \vee \G \psi_1 \quad \text{(${} \equiv \psi_2 \R (\psi_2 \vee \psi_1))$}
\end{align*}
For the ``release'' operator $\TO{R}$, the formula $\psi_1 \R \psi_2$ says
that $\psi_2$ remains true until and including once $\psi_1$ becomes true;
if $\psi_1$ never become true, $\psi_2$ must remain true forever. 
Regarding the ``weak until'' operator $\TO{W}$, the formula $\psi_1 \W \psi_2$ is similar to
$\psi_1 \U \psi_2$ but the stop condition $\psi_2$ is not required to occur. In that case
$\psi_1$ must remain true forever.

\define{Distributivity laws}:
\begin{align}
  \X (\psi_1  \lor  \psi_2 ) &\equiv  (\X \psi_1 ) \lor  (\X \psi_2 ) &
 \X (\psi_1  \land  \psi_2 )& \equiv  (\X \psi_1 ) \land  (\X \psi_2 ) \label{eq:d1} \tag{D1}\\
  \X (\psi_1  \U \psi_2 )& \equiv  (\X \psi_1 ) \U (\X \psi_2 )  &
\X (\psi_1  \R \psi_2 )& \equiv  (\X \psi_1 ) \R (\X \psi_2 ) \tag{D2} \\
  \F (\psi_1  \lor  \psi_2 ) &\equiv  (\F \psi_1 ) \lor  (\F \psi_2)&
 \G (\psi_1  \land  \psi_2 )&\equiv  (\G \psi_1 ) \land  (\G \psi_2 ) \tag{D3}\\
  \psi  \U (\psi_1  \lor  \psi_2 ) &\equiv  (\psi  \U \psi_1 ) \lor  (\psi  \U \psi_2 ) &
 (\psi_1  \land  \psi_2 ) \U \psi  &\equiv (\psi_1  \U \psi ) \land  (\psi_2  \U \psi ) \tag{D4}
\end{align}

\define{Negation propagation laws}:
\begin{align}
  \lnot \X \psi_1  &\equiv  \X \lnot \psi_1 & \lnot \G \psi_1  &\equiv  \F \lnot \psi_1 & \lnot \F \psi_1  &\equiv  \G \lnot
                                                                \psi_1  \label{eq:n1} \tag{N1}\\
\lnot  (\psi_1  \U \psi_2 ) &\equiv  (\lnot \psi_1  \R \lnot \psi_2) & \lnot  (\psi_1  \R \psi_2 ) &\equiv  (\lnot \psi_1  \U \lnot \psi_2 ) \label{eq:n2} \tag{N2}
\end{align}

\define{Absorption laws}:
\begin{align}
  \F \F \psi &\equiv \F \psi & \G \G \psi &\equiv \G \psi  \tag{A1} \\
  \F \G \F \psi &\equiv \G \F \psi & \G \F \G \psi &\equiv \F \G \psi  \tag{A2}
\end{align}

\define{Expansion laws}:
\begin{align}
  \psi_1 \U \psi_2 & \equiv \psi_2 \vee (\psi_1 \wedge \X (\psi_1 \U \psi_2 )) &
  \neg(\psi_1 \U \psi_2) & \equiv \neg\psi_2 \land (\neg \psi_1 \vee \X \neg(\psi_1 \U \psi_2 ))  \tag{E}
\end{align}

For ${\sim} \in \{ <, \leq, >, \geq \}$ define the operators $\compl{\sim}$ and $[\sim]$ as follows:
\begin{align*}
{\compl{<}} & ={\ge} &
{\compl{\leq}} & ={>} &
{\compl{>}} & ={\leq} &
{\compl{\ge}} & ={<} \\
[<] & ={>} &
[\leq] & ={\ge} &
[>] & ={<} &
[\ge] & ={\le} 
\end{align*}

Some of the following equivalences cannot be used for ``model checking'' PCTL* (the
left~(\ref{eq:p1a}) equivalence, to be specific) where actions are implicitly
universally quantified. In the context of Markov Chains, which we implicitly have,
there is no problem:
\begin{align}
  \neg \P_{\sim z} \psi & \equiv \P_{\compl{\sim} z} \psi &
  \P_{\sim z} \neg \psi & \equiv \P_{\mathop{[\sim]} 1-z} \psi \label{eq:p1a} \tag{P1} \\
  \P_{\ge 0} \psi & \equiv \opTrue & \P_{> 1} \psi & \equiv \opFalse \label{eq:p2a} \tag{P2}\\
  \P_{\le 1} \psi  & \equiv \opTrue & \P_{< 0} \psi & \equiv \opFalse \label{eq:p3a} \tag{P3}\\
  \P_{\geq u} \P_{\sim z} \psi & \equiv \P_{\sim z} \psi \quad\text{if $u \neq 0$} &
  \P_{> u} \P_{\sim z} \psi & \equiv \P_{\sim z} \psi \quad\text{if $u\neq 1$} \label{eq:p4a} \tag{P4}\\
  \P_{\leq u} \P_{\sim z} \psi & \equiv \P_{\ge 1-u}\P_{\compl{\sim} z} \psi &
  \P_{< u} \P_{\sim z} \psi & \equiv \P_{> 1-u}\P_{\compl{\sim} z} \psi \label{eq:p5a} \tag{P5}
\end{align}
Some notes on these equivalences.
The left equivalence of~(\ref{eq:p1a}) is trivial and the right equivalence
uses the fact that $\Pr^{\calM_\pi}\{s\cdots  \mid \calM_\pi, s\cdots \models \psi\} +
\Pr^{\calM_\pi}\{s\cdots  \mid \calM_ \pi, s\cdots \models \neg\psi\} = 1$.
The equivalences~(\ref{eq:p2a}) and (\ref{eq:p3a}) are again trivial.
For the equivalences~(\ref{eq:p4a}) it helps to observe that the subformula
$\P_{\sim z} \psi$ is both a state and a path formula. As a state formula it evaluates
in a given context $\calM,\pi,s$ to either true or false. Taken as a path formula of the
outer $\PQ P$ quantifier it hence stands for the set of either \emph{all} runs from
$s$ or the empty set, respectively. With this observation the
equivalences~(\ref{eq:p4a}) follow easily from the semantics of the $\PQ P$ operator.
The equivalences~(\ref{eq:p5a}) are obtained by first applying the right
equivalence in~(\ref{eq:p1a}) (from right to left) and then the left equivalence
in~(\ref{eq:p1a}). 

\section{Example}
\label{sec:example}
Consider the MDP in Figure~\ref{fig:mdp-1} and the partially specified finite-memory
policy $\pi_\fin = (\{\m\}, \start, \Delta, \cdot)$ with a single mode $\m$, making
$\pi_\fin$ Markovian.  The functions $\start$ and $\Delta$ hence always return $\m$,
allowing us to abbreviate $\pair m {s_i}$ as just $s_i$.
Action $\beta$ leads non-deterministically to states $s_2$ and $s_3$, each with
probability 0.5. The actions $\alpha_i$ for $i \in \{1,2,3\}$ are self-loops with
probability one (not shown).  The label set of $s_2$ is $\{a\}$ in all other states
it is empty.  

The example is admittedly simple and is only from the PCTL subset
of CTL*. But it suffices to show the main aspects of the calculus.
\begin{figure}[htp]
  \centering
  \includegraphics[scale=0.9]{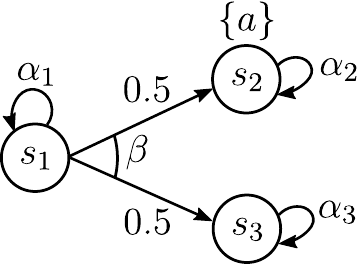}
  \caption{The transitions of the example MDP $\calM$ depicted as a graph. The
    initial state is $s_1$. Action $\beta$ leads non-deterministically to 
    states $s_2$ and $s_3$, each with probability 0.5. The actions $\alpha_i$ for $i \in
    \{1,2,3\}$ are self-loops with probability one (not shown).
    The label set of $s_2$ is $\{a\}$ in all other states it is empty. 
  }
  \label{fig:mdp-1}
\end{figure}

Let the state formula of interest be $\phi = \P_{\ge 0.3} \F \G a$. We
wish to obtain a $\Gamma_\final$ such that any solution $\sigma$ synthesizes a suitable
$\act_\sigma$, i.e., the policy $\pi_\fin(\sigma)$ completed by $\sigma$ satisfies 
$\calM_{\pi_\fin(\sigma)}, s_1 \models \phi$.

The BSCCs depend on whether $\act_\sigma(\m, s_1, \beta) > 0$ holds, i.e., if
$\beta$ can be executed at $s_1$. 
(This is why the calculus needs to make a corresponding guess, with its \IR{A}-rule.)
If not,
then $s_2$ and $s_3$ are unreachable, and the self-loop at $s_1$ is the only BSCC,
which does not satisfy $\G a$. 
If yes, then there are two BSCCs, the
self-loop at $s_2$ and the self-loop at $s_3$, and the BSCC at $s_2$ satisfies $\G a$. 
By fairness of execution, with probability one some BSCC will be reached, and the BSCC
at $s_2$ is reached with probability 0.5, hence,  if $\act_\sigma(\m, s_1, \beta) > 0$.
In other words, devising \emph{any} policy that reaches $s_2$ will hence suffice to
satisfy $\phi$.  The expected result thus is just a constraint on $\sigma$ saying
$\act_\sigma(\m, s_1, \beta) > 0$ and the derivation will indeed show that by
$\CHOOSE$ing a branch with $x_{s_1}^\beta > 0$ in $\Gamma_\final$.

Figures~\ref{fig:deriv-1} to \ref{fig:deriv-5} summarize the derivation from the
initial sequent $\sigma_1 = x_{s_1}^{\{\phi\}} \doteq 1 \vdash s_1:\phi$.  In these figures we
write, for brevity, $\Gamma \vdash \psi, \Psi$ instead of $\Gamma \vdash \{\psi\} \uplus \Psi$. For better readability we write
$\G \psi$ as a macro for $\neg \F \neg \psi$ and $\F \psi$ for $\opTrue \U \psi$. Please consider the
captions in the figures for further explanations on notation.

Tree \circled{1} in Figure~\ref{fig:deriv-1} shows the derivation of
$\CHOOSE(\TABLEAU(\sigma_1))$. It has only one branch which ends in the leaf node $u_\circled{1}$.
Because there are no BSCCs in tree \circled{1}, $\FORCE$ does not add constraints, and
therefore $\Gamma_\final = \Gamma_{u_\text{\circled{1}}}$ (the constraint system of $u_\text{\circled{1}}$). Notice that if the derivation had
$\CHOOSE$n the right branch at the top of tree \circled{1}, $\Gamma_\final$ would be
unsatisfiable (as indicated in Figure~\ref{fig:deriv-1}).  This case is
uninteresting and we do not consider it further.
\begin{figure}[htp]
  \centering
  \includegraphics[scale=0.9]{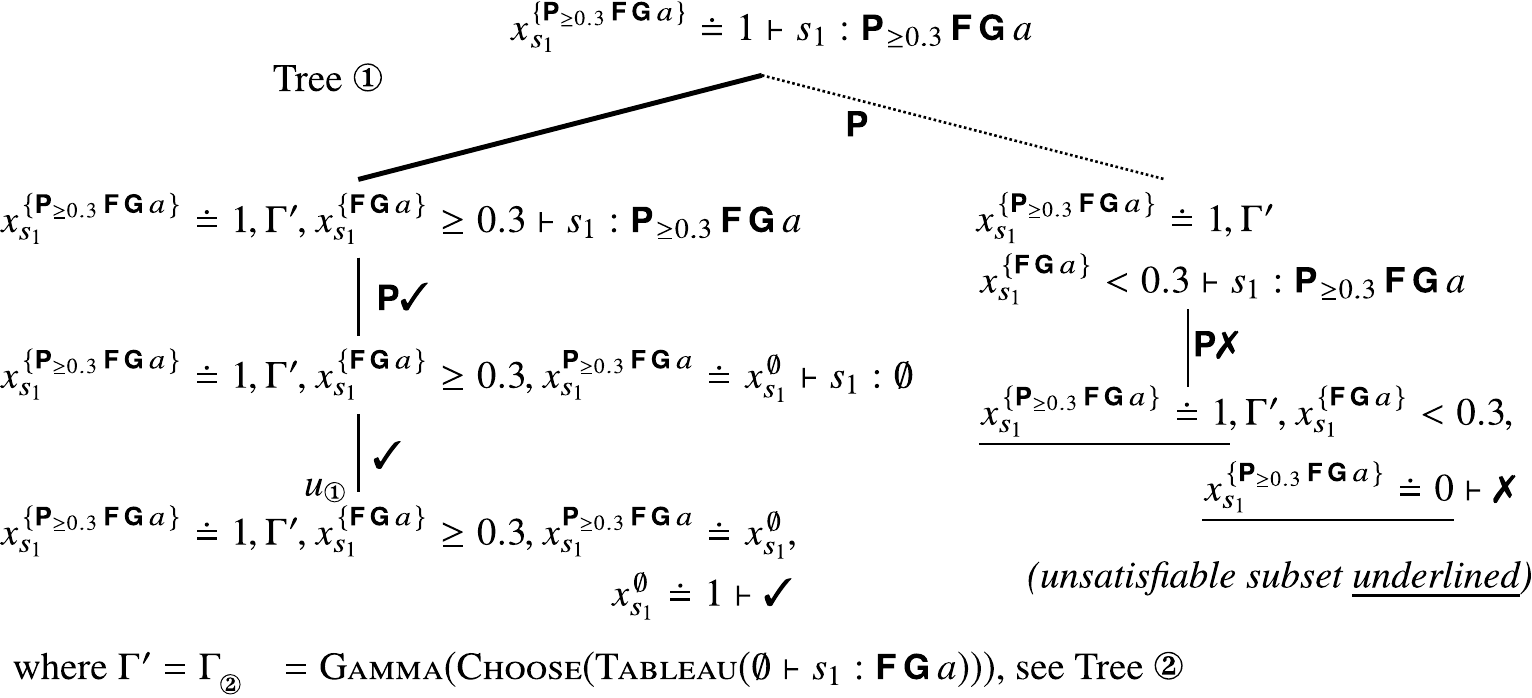}
  \caption{Start of derivation for synthesizing a policy for the MDP in
    Figure~\ref{fig:mdp-1} for the formula $\P_{\ge 0.3} \F \G a$.
    The links are annotated with the name of the inference rule applied.
    The $\CHOOSE$n
    alternative in the \IR{$\PQ P$} inference is the bold link, the dotted link is the
    non-$\CHOOSE$n alternative. The leaf node of
    the left branch is called $u_{\text{\circled{1}}}$.  }
  \label{fig:deriv-1}
\end{figure}

Figures~\ref{fig:deriv-2}-\ref{fig:deriv-5} in combination show the derivation from initial sequent
$\emptyset \vdash s_1: \F \G a$ for computing $\Gamma_\circled{2}$ in Figure~\ref{fig:deriv-1}. The dotted
line in Figure~\ref{fig:deriv-2} represents the contribution of tree \circled{3} to
tree \circled{2} in terms of probabilities for the pivot $s_1: \F \G a$ in the root of tree
\circled{2}. More precisely, the constraint $\Gamma_\circled{2}$ will
enforce $x_{s_1}^{\{\F\G\  a\}} = x_{s_1}^{\{\X\F\G\ a\}}$, where
$x_{s_1}^{\{\X\F\G\  a\}}$ holds the probability of $s_1: \X\F\G\ a$ as per tree
\circled{3}. This equality holds because all branches in tree \circled{2} are
\Cross-ed, therefore each contributing a value 0 to the sums in the union
branches. (Figures~\ref{fig:deriv-4} and ~\ref{fig:deriv-5} have more such examples.)
\begin{figure}[htp]
  \centering
  \includegraphics[scale=0.9]{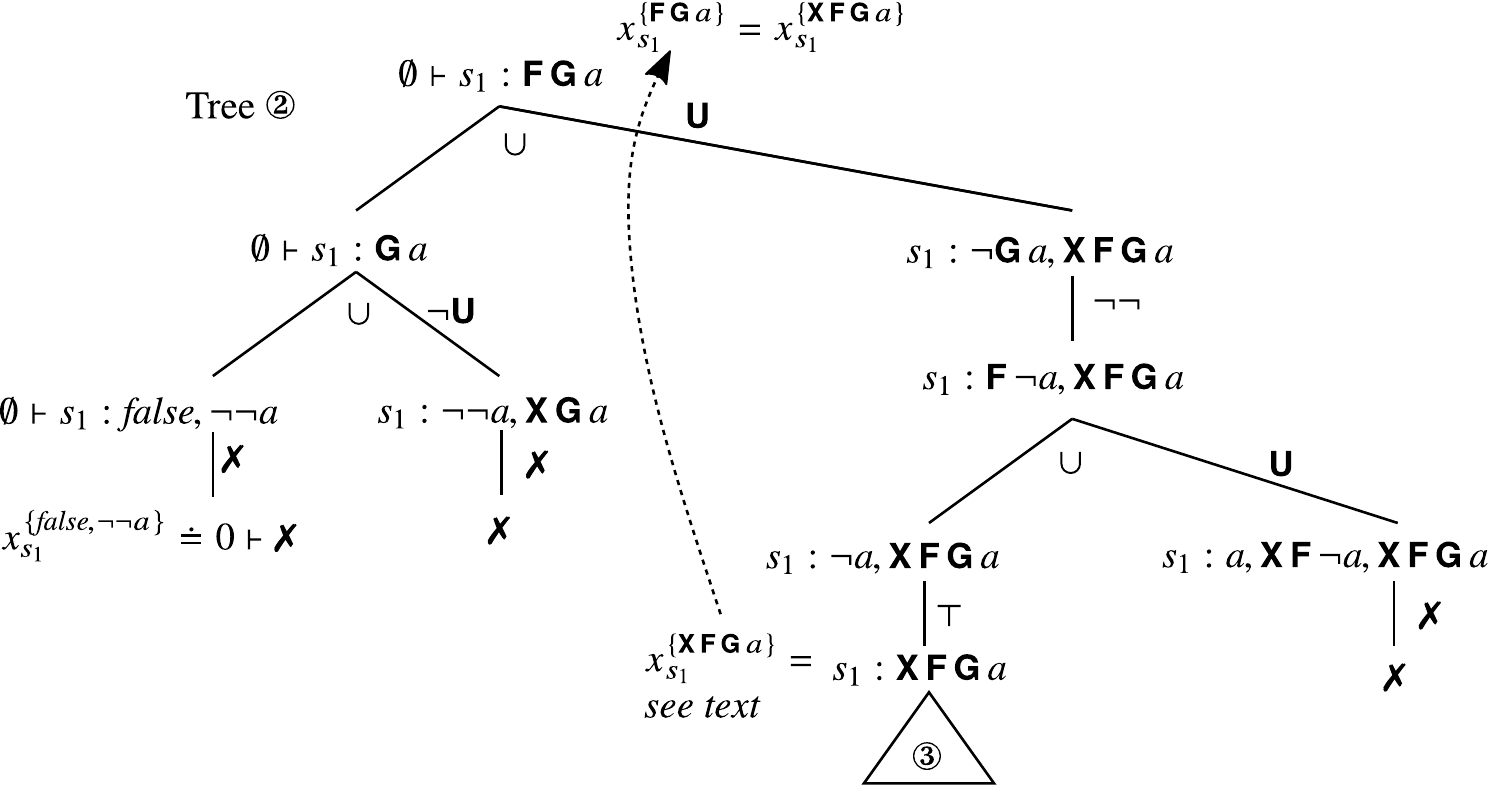}
  \caption{First part of a derivation from $\emptyset \vdash s_1: \F \G a$ for computing $\Gamma_\circled{2}$ in Figure~\ref{fig:deriv-1}. For space
    reasons we only write the right hand sides of sequences or only write interesting
    parts of the left hand sides. The dotted line represents a dependency of
    probability constraints. The derivation for tree \circled{3} is in Figure~\ref{fig:deriv-3}.}
  \label{fig:deriv-2}
\end{figure}

Figure~\ref{fig:deriv-3} has the tree \circled{3} with pivot $s_1: \X \F \G a$ at its
root. Again the $\CHOOSE$n alternatives are already highlighted, this time for the
\IR{\textsf{A}} inferences. ($\CHOOSE$ing the $x_{s_1}^{\alpha_1} \doteq 0$ branch would also lead to
a policy, but obviously $\beta$ must be prescribed in $s_1$ for being able to reach $s_2$.)
\begin{figure}[htp]
  \centering
  \includegraphics[scale=0.9]{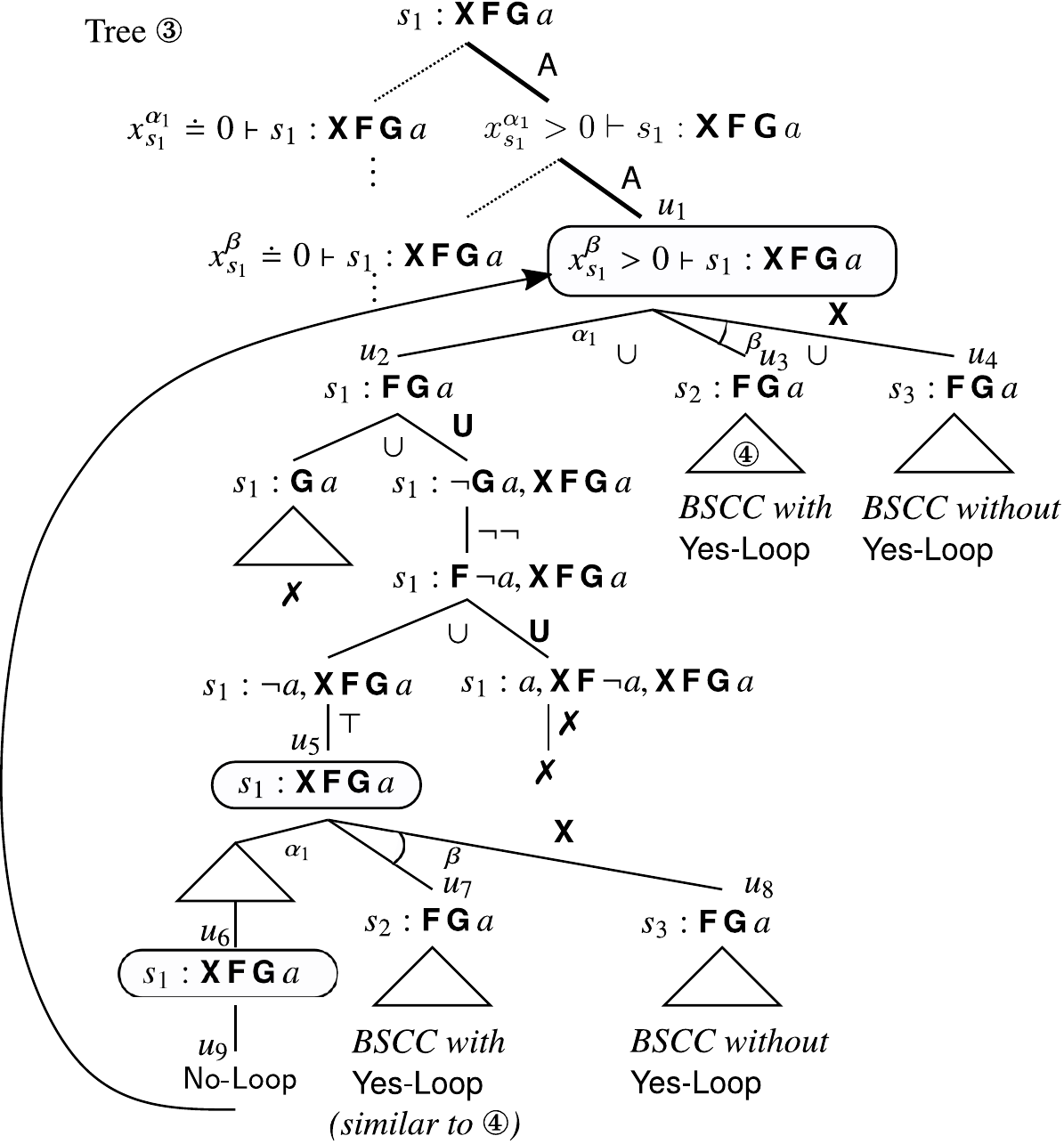}
  \caption{Sub-derivation with Tree \circled{3} continuing
    Figure~\ref{fig:deriv-2}. Poised nodes are framed.
    The links in the \IR{$\TO X$} inference are annotated
    with actions. An triangle with a \Cross is
    a tree all whose leaves are \Cross-ed. A triangle with links at the bottom is a
    tree with branches into the linked nodes and all whose leaves are \Cross-ed.  A triangle labelled
    ``\textit{BSCC with/without a \textnormal{\textsf{Yes-Loop}}}'' is
      a tree that contains a BSCC with/without \textsf{Yes-Loop} and all non-loop
      leaves are \Cross-ed. Alternative choices for the \IR{A} rule are only indicated
      as dots, as they are not relevant in this example. See text for propagation of constraints.}
  \label{fig:deriv-3}
\end{figure}

It is instructive to see how tree \circled{3} contributes to $\Gamma_\circled{2}$ a constraint
for $\smash{x_{s_1}^{\{\X\F\G a\}}}$, i.e., the pivot at its root. For that, assume that
the trees annotated as \textit{BSCC with (without) \textnormal{\textsf{Yes-Loop}}} all
contribute a probability of one (zero, respectively) to $\Gamma_\circled{2}$. (We spell this out
in detail only in one case, cf.\ tree \circled{4} in Figure~\ref{fig:deriv-4}.)
Combining all constraints in the leaves of tree \circled{3} then entails the
following set of equalities (the comment $u_1 \leftarrow u_2, u_3, u_4$ indicates the
dependency of the left hand side of its equality from the right hand side in terms of
corresponding nodes):
\begin{align}
  (x_{u_1})_{s_1}^{\{\X\F\G a\}} & =
                           x_{s_1}^{\alpha _1}\cdot (x_{u_2})_{s_1}^{\{\F\G a\}} + 
                           x_{s_1}^{\beta}\cdot 0.5 \cdot (x_{u_3})_{s_2}^{\{\F\G a\}} + 
                           x_{s_1}^{\beta}\cdot 0.5 \cdot (x_{u_4})_{s_3}^{\{\F\G a\}}
                           \tag{$u_1 \leftarrow u_2, u_3, u_4$} \\ 
  (x_{u_2})_{s_1}^{\{\F\G a\}} &= 
                             (x_{u_5})_{s_1}^{\{\X\F\G a\}} \tag{$u_2 \leftarrow u_5$}\\ 
  (x_{u_3})_{s_2}^{\{\F\G a\}} & = 1 \tag{$u_3$}\\ 
  (x_{u_4})_{s_3}^{\{\F\G a\}} &= 0 \tag{$u_4$} \\ 
(x_{u_5})_{s_1}^{\{\X\F\G a\}} &= 
                           x_{s_1}^{\alpha _1}\cdot (x_{u_6})_{s_1}^{\{\X\F\G a\}} + 
                           x_{s_1}^{\beta}\cdot 0.5 \cdot (x_{u_7})_{s_2}^{\{\F\G a\}} + 
                             x_{s_1}^{\beta}\cdot 0.5 \cdot (x_{u_8})_{s_3}^{\{\F\G a\}}
  \tag{$u_5 \leftarrow u_6, u_7, u_8$} \\ 
  (x_{u_6})_{s_1}^{\{\X\F\G a\}} & = (x_{u_9})_{s_1}^{\{\X\F\G a\}} \tag{$u_6 \leftarrow u_9$} \\
  (x_{u_7})_{s_2}^{\{\F\G a\}} & = 1 \tag{$u_7$} \\
  (x_{u_8})_{s_3}^{\{\F\G a\}} &= 0 \tag{$u_8$}\\
  (x_{u_9})_{s_1}^{\{\X \F\G a\}} &= (x_{u_1})_{s_1}^{\{\X \F\G a\}} \tag{$u_9 \leftarrow u_1$}
\end{align}
Substituting in yields a simplified set of equalities:
\begin{align}
(x_{u_1})_{s_1}^{\{\X\F\G a\}} & =
                           x_{s_1}^{\alpha _1}\cdot (x_{u_5})_{s_1}^{\{\X\F\G a\}} + 
                           0.5 \cdot x_{s_1}^{\beta} \tag{$u_1 \leftarrow u_2, u_3, u_4$} \\ 
(x_{u_5})_{s_1}^{\{\X\F\G a\}} &= 
                           x_{s_1}^{\alpha _1}\cdot (x_{u_1})_{s_1}^{\{\X\F\G a\}} + 
                           0.5 \cdot x_{s_1}^{\beta}  \tag{$u_5 \leftarrow u_6, u_7, u_8$} 
\end{align}
%
%
Isolating $(x_{u_1})_{s_1}^{\{\X\F\G a\}}$ on the left hand side:
\begin{equation*}
  (x_{u_1})_{s_1}^{\{\X\F\G a\}}\cdot(1 - (x_{s_1}^{\alpha _1})^2) =
                           0.5 \cdot x_{s_1}^{\beta}\cdot(1 + x_{s_1}^{\alpha _1})  
                         \end{equation*}
Both $\alpha_1$ and $\beta$ are prescribed actions thanks to the two
\IR{\textsf{A}} inferences preceding the node $u_1$, resulting in
$\smash{\{x_{s_1}^{\alpha_1} > 0, x_{s_1}^{\beta} > 0\} \subset \Gamma_\circled{2}}$.
The \IR{$\TO X$} inference
at $u_1$ adds the constraint $x_{s_1}^{\alpha_1} + x_{s_1}^{\beta} \doteq 1$ to
$\Gamma_\circled{2}$ (cf.\ $\gamma_2$ in the definition of the \IR{$\TO X$} rule).
It follows $x_{s_1}^{\beta} = 1 - x_{s_1}^{\alpha_1}$. Substituting into the previous
equality we get
\begin{equation*}
  (x_{u_1})_{s_1}^{\{\X\F\G a\}}\cdot(1 - (x_{s_1}^{\alpha _1})^2)  =
                           0.5 \cdot (1 - x_{s_1}^{\alpha_1})\cdot(1 + x_{s_1}^{\alpha _1}) 
  =                            0.5 \cdot (1 - (x_{s_1}^{\alpha_1})^2) 
\end{equation*}
Again from $\smash{x_{s_1}^{\alpha_1} + x_{s_1}^{\beta} \doteq 1}$ and $\smash{x_{s_1}^{\beta} > 0}$
it follows $\smash{x_{s_1}^{\alpha_1} < 1}$ and thus
$\smash{(1 - (x_{s_1}^{\alpha _1})^2) > 0}$. This allows us to divide both sides by this term and we simply
get
\begin{equation*}
  (x_{u_1})_{s_1}^{\{\X\F\G a\}}  =0.5
\end{equation*}
Moreover, from the simplified constraints above, we have:
\begin{equation*}
(x_{u_5})_{s_1}^{\{\X\F\G a\}} 
  = x_{s_1}^{\alpha _1}\cdot (x_{u_1})_{s_1}^{\{\X\F\G a\}} + 0.5 \cdot x_{s_1}^{\beta} 
  = 0.5 \cdot (x_{s_1}^{\alpha _1} + x_{s_1}^{\beta})
  = 0.5 
\end{equation*}
From tree \circled{2} we get $\smash{(x_{u_2})_{s_1}^{\{\F\G a\}} = (x_{u_5})_{s_1}^{\{\X\F\G a\}}}$.
Substituting into $\smash{x_{s_1}^{\{\F\G a\}} \ge 0.3}$ from tree \circled{1} we get 
a tautology.

Altogether, the only non-trivial constraints in $\Gamma_\final$ are those
introduced by the various \IR{\textsf{A}} inferences for constraining probabilities
of actions. For the concretely $\CHOOSE$n alternatives in the example, this means
that any solution that satisfies  $\smash{x_{s_1}^{\alpha_1} > 0}$ and $\smash{x_{s_1}^{\beta} > 0}$
provides a policy. Notice that only $\smash{x_{s_1}^{\beta} > 0}$ is essential, and
$\CHOOSE$ing the alternative $\smash{x_{s_1}^{\alpha_1} \doteq 0}$, which entails
$\smash{x_{s_1}^{\beta} = 1}$, will do as well.

At this point it only remains to go through the derivations for tree \circled{4} and
\circled{5}. With the explanations so far this should be straightforward. 

Figure~\ref{fig:deriv-4} has the tree \circled{4} for the pivot $s_2 : \F \G a$. From
the MDP in Figure~\ref{fig:mdp-1} we expect it has probability one. It is formally
computed by adding the probabilities of $s_2 : \G a$, which is one, and of
$s_2 : \neg\G a, \X \F \G a$, which is zero. The latter requires identification of a
BSCC without yes-loops, as depicted, which, hence,  $\FORCE$es zero.
\begin{figure}[htp]
  \centering
  \includegraphics[scale=0.9]{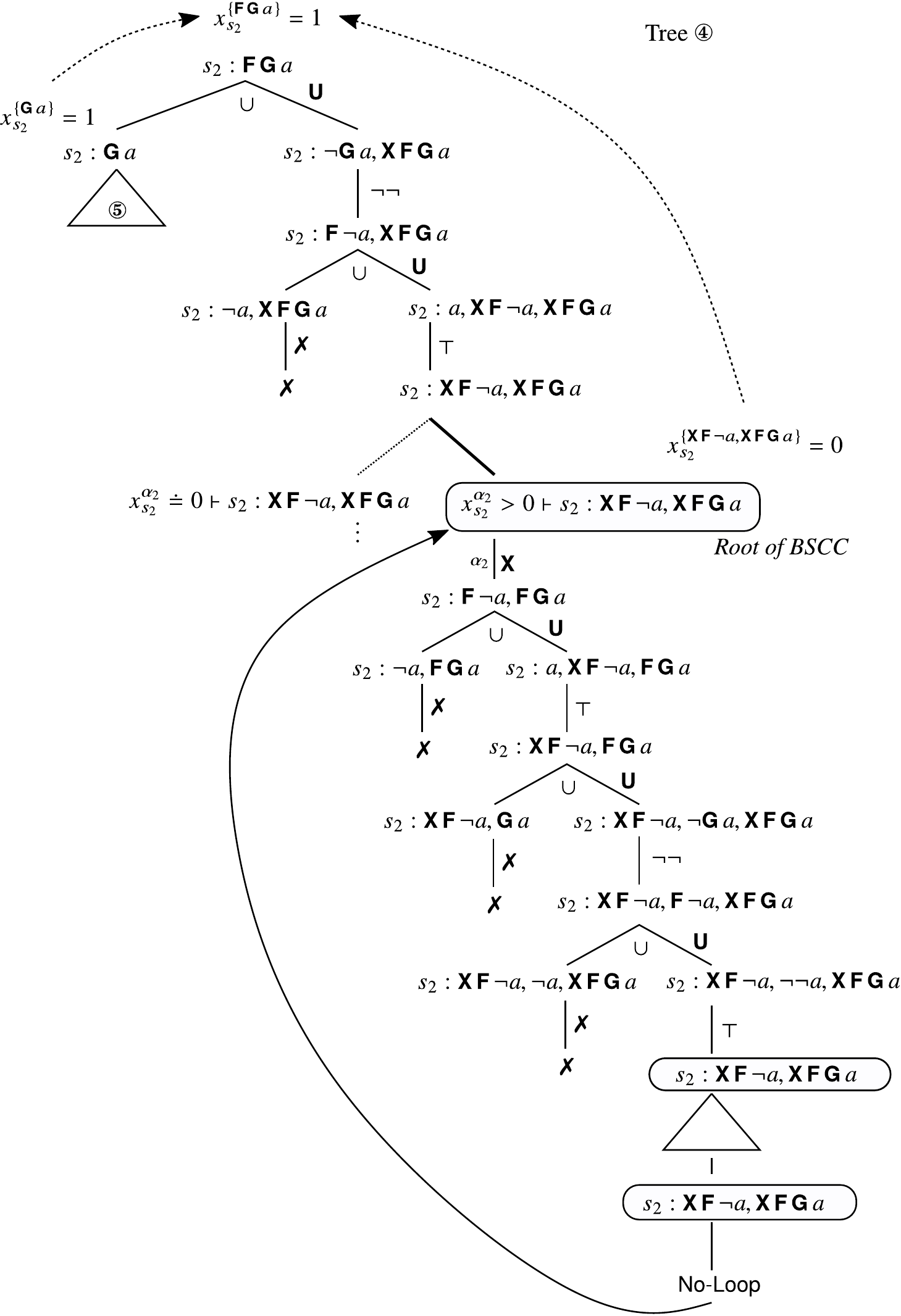}
  \caption{Sub-derivation with Tree \circled{4} continuing Figure~\ref{fig:deriv-3}.}
  \label{fig:deriv-4}
\end{figure}

Similarly, Figure~\ref{fig:deriv-5} has the tree \circled{5} for the pivot $s_1 s_2 : \G
a$. It is an example for a BSCC with a yes-loop. Notice that the poised pivot that forms
the BSCC has no $\TO X$-eventualities at all.
\begin{figure}[htp]
  \centering
  \includegraphics[scale=0.9]{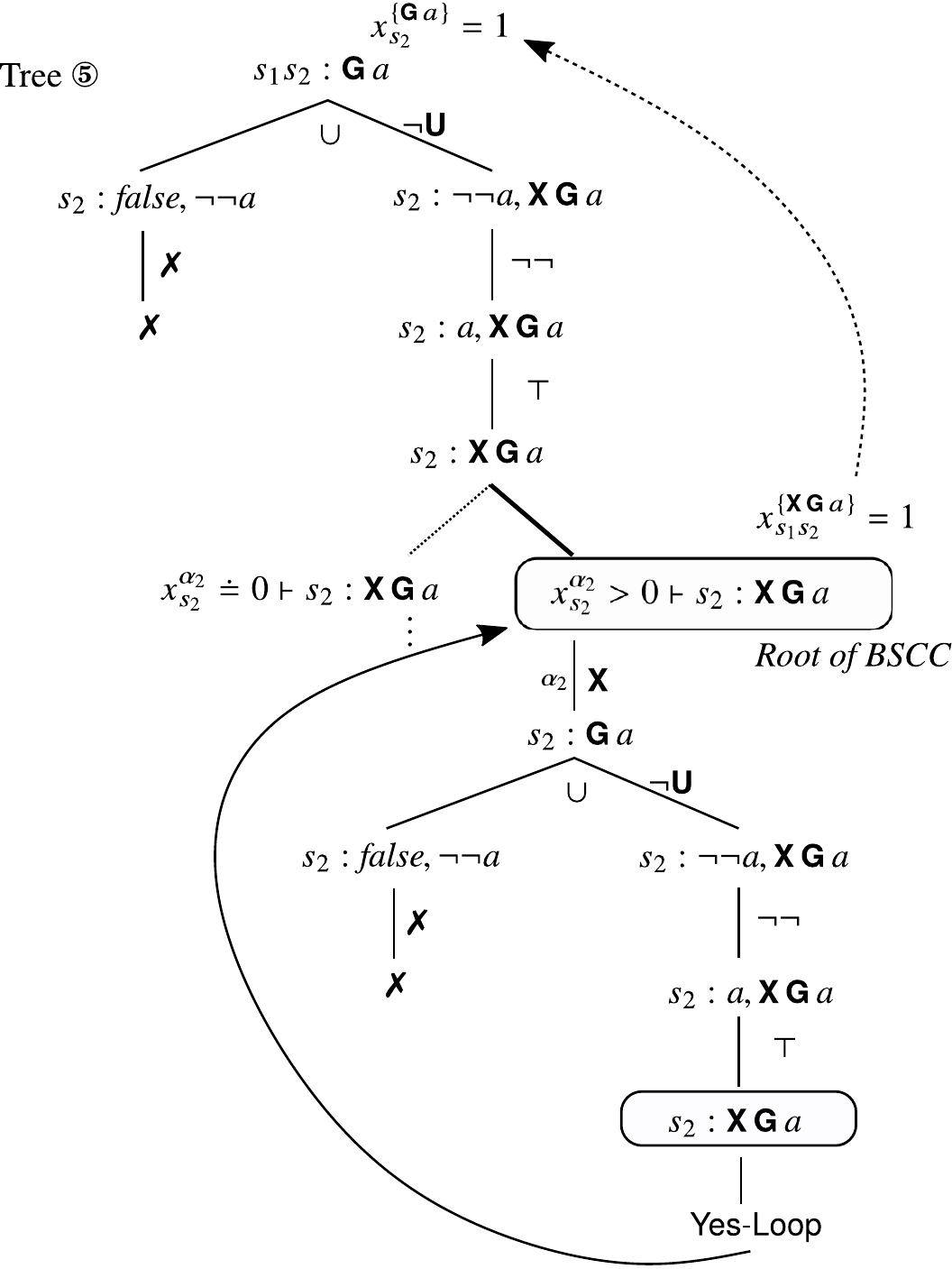}
  \caption{Sub-derivation with Tree \circled{5} continuing Figure~\ref{fig:deriv-4}.}
  \label{fig:deriv-5}
\end{figure}

\clearpage\newpage
\section{Proofs}
\label{sec:proofs}

\begin{lemma}
  \label{lemma:uvw}
  Let $b$ be a branch in a tree with poised nodes $u$ and $w$ with $u < w$.
  If $\Psi_u = \Psi_w$ then for every poised node $v$ with $u < v < w$ it holds 
  $\Psi_v \supseteq \Psi_u$.
\end{lemma}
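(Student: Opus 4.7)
The plan is to analyze the preservation of elements of $\Psi_u$ along the sequence of poised nodes on $b$ between $u$ and $w$, using the finite subformula closure of $\Psi_u$ together with the identity-preserving property of the $\TO U$- and $\neg\TO U$-continue branches.

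First, I would enumerate the poised nodes $u = u_0 < u_1 < \cdots < u_k = w$ along $b$. Because the $\TO X$-rule applies only to poised nodes and ranks below the expansion rules in preference, between consecutive $u_i$ and $u_{i+1}$ the calculus applies $\TO X$ exactly once at $u_i$ (possibly after \IR{A} and \IR{$\Prescribed$} inferences, which leave the pivot unchanged), producing the $\TO X$-child with the stripped formula set $\{\chi : \X\chi \in \Psi_{u_i}\}$. Only expansion rules then apply, each processing a non-$\X$ formula while leaving existing $\X$-formulas untouched, until the next poised node $u_{i+1}$ is reached.

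Second, I would enumerate the ways an $\X$-formula $e$ can appear in $\Psi_{u_{i+1}}$. The only identity-preserving mechanism is (d): $e$ is of the form $\X(\psi_1 \U \psi_2)$ or $\X\neg(\psi_1 \U \psi_2)$, and the continue branch of the $\TO U$- or $\neg\TO U$-rule puts $e$ back into $\Psi_{u_{i+1}}$. All other mechanisms --- (a) $e$ was already present in the stripped set (requiring $\X e \in \Psi_{u_i}$), (b) $e$ was revealed by $\land$-, $\neg\land$-, or $\neg\neg$-decomposition of a strictly larger compound $\X$-formula in the stripped set, or (c) $e$ was introduced by the $\neg\X$-rewrite of a similarly sourced $\neg\X\psi$ --- require a strictly larger ``hiding'' formula in $\Psi_{u_i}$. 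Iterating on such a larger hiding formula produces a strictly ascending chain in formula complexity, bounded by the finite subformula closure of the initial formulas in $\Psi_u$, so only case (d) can support unbounded preservation along a loop.

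The main obstacle and the heart of the proof is combining these observations under the hypothesis $\Psi_u = \Psi_w$. For each $e \in \Psi_u$, we also have $e \in \Psi_w$, so $e$ must be preservable throughout the entire branch from $u$ to $w$; by the second step this forces $e$ to be of $\U$-form or $\neg\U$-form and preserved at every intermediate step via case (d), giving $e \in \Psi_{u_i}$ for every $0 \le i \le k$ and hence $\Psi_v \supseteq \Psi_u$ for every intermediate poised $v$. The subtle point I expect to have to formalise carefully is the finite-closure ascent argument in step two: showing that a non-$\U$-form $e$ attempting to persist via cases (a)--(c) ultimately requires $\Psi_u$ to contain an unbounded tower of strictly larger compound $\X$-formulas, contradicting the finiteness of the Fischer-Ladner-style closure of the initial formulas.
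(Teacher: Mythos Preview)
Your approach is essentially the paper's: the paper also argues that each stripped $\psi_i$ must be a $\TO U$- or negated $\TO U$-formula (your case~(d)) and must persist from $u$'s $\TO X$-successor down to $w$ because ``it is impossible to derive a $\TO U$-formula or a negated $\TO U$-formula from simpler formulas''. Your explicit enumeration of poised nodes and the case analysis (a)--(d) flesh out what the paper sketches more tersely; in particular, your finite-closure ascent argument is aimed at the possibility of an $\X$-formula reappearing via decomposition of a strictly larger formula, which the paper's ``cannot derive from simpler'' leaves implicit.
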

\begin{proof}
  Suppose $\Psi_u = \Psi_w$ and a poised node $v$ with $u < v < w$.
  Let $x$ be the $\TO X$-successor node of $u$ on $b$. 
  The set $\Psi_u$ is of the form $\{ \X \psi_1, \ldots, \X \psi_n\}$ and, hence,
  $\Psi_x = \{ \psi_1, \ldots, \psi_n\}$.  

  Now consider the available inference rules. It is straightforward to check that
  every inference rule except $\TO U$ and $\neg \TO U$ either leaves the pivot $\pair m
  s :\Psi$
  untouched or replaces some 
  $\psi \in \Psi$ by zero or more strictly simpler formulas.
  With  ``$\psi_1$ is simpler than $\psi_2$'' we mean that $\psi_1$ has strictly less symbols
  than $\psi_2$ or else the number of symbols are the same but a negation sign in $\psi_1$ has been
  pushed inwards to get $\psi_2$ (this is needed for the $\neg \TO X$ rule). 

  The $\TO U$ and $\neg \TO U$ rules also add simpler subformulas to the conclusion,
  together with an $\TO X$-operator in front of the $\TO U$ formula or negated $\TO
  U$ formula it was applied to.

  The $\TO U$ and $\neg \TO U$ rules are the only ones that can restore the
  $\TO X$-operator applications in $\Psi_w = \{ \X \psi_1, \ldots, \X \psi_n\}$.
  In other words, the ancestor nodes of $w$ must collectively contain the formulas
  $\psi_1,\ldots,\psi_n$ in their formula sets and, furthermore, every $\psi_i$ is an $\TO U$-formula or a
  negated $\TO U$-formula.

  Because it is impossible to derive a $\TO U$-formula or a negated $\TO U$-formula
  from simpler formulas (there is just no inference rule capable of that) this means
  they must have been present from the beginning. No
  $\psi_i \in \Psi_x$ can have been replaced by simpler formulas from $x$ down to $w$.
  Furthermore,
  every $\psi_i$ must have been subject to a $\TO U$ and $\neg \TO U$ inference
  above the node $v$, otherwise $v$ is not poised. It follows $\Psi_v \supseteq \Psi_u$.
  \qed
\end{proof}

\begin{theorem*}[Soundness]{th:soundness}
  Let $\calM = (S, s_\init, A, P, L)$ be an MDP,
  $\pi_\fin = (M, \start, \Delta, \cdot)$ be
  a partially specified finite-memory policy with $\act$ unspecified,
  and $\phi$ a state formula. Suppose there is a program
  $\Gamma_\final := \GAMMA(\CHOOSE(\TABLEAU(
\{x_{\pair {\start(s_\init)} {s_\init}}^{\{\phi\}}\doteq 1\} \vdash \pair {\start(s_\init)} {s_\init}:\{\phi\})))$
  such that $\Gamma_\final$ is satisfiable.
  Let $\sigma$ be any solution of $\Gamma_\final$ and $\pi_\fin(\sigma)$ be the policy completed by $\sigma$.
  Then it holds $\calM, \pi_\fin(\sigma) \models \phi$.
\end{theorem*}

\begin{proof}
(Sketch)
  Let $\Gamma_\final$, $\sigma$ and $\pi_\fin(\sigma)$ as stated. We have to show
$\calM, \pi_\fin(\sigma) \models \phi$, equivalently
$\calM_{\pi_\fin(\sigma)}, \pair {\start(s_\init)} {s_\init} \models \phi$.

We need to look at the collection of sub-derivations from the initial sequent.
Let $\calT_0,\calT_1,\ldots,\calT_n$ be a sequence of trees such that
\begin{align*}
  \calT_0 & = \CHOOSE(\TABLEAU(\{x_{\pair {\start(s_\init)} {s_\init}}^{\{\phi\}}\doteq 1\} \vdash \pair {\start(s_\init)} {s_\init}:\{\phi\})) \\
  \calT_j & = \CHOOSE(\TABLEAU(\emptyset \vdash \pair {m_j} {s_j}: \{ \psi_j \}))\quad  \pbox[t]{ for $j=1\ldots n$, where $\calT_j$ 
            comes \\ from some \IR{$\TO P$} inference in the\\ derivation}
\end{align*}
For $i = 0\ldots n$ we call $\calT_i$ a \define{tree from $\Gamma_i \vdash \pair {m_i} {s_i} : \{\psi_i \}$}.

The sequence is meant to be minimal and closed under \IR{$\TO P$} inferences in 
subderivations. Formally, it is constructed inductively: starting from $\calT_0$ 
one takes the \IR{$\TO P$} inferences in $\calT_0$ and appends the 
trees for these \IR{$\TO P$} inferences. Then proceed to the next tree in the sequence, do the same,
and so on, until all sub-derivations have been processed.

We can accompany the sequence of trees with a sequence 
$\Gamma_0,\Gamma_1,\ldots,\Gamma_n$, where $\Gamma_i = \GAMMA(\calT_i)$, for $i = 0\ldots n$.
Together they represent a derivation from
$\{x_{\pair {\start(s_\init)} {s_\init}}^{\{\phi\}}\doteq 1\} \vdash \pair {\start(s_\init)} {s_\init}:\{\phi\}$ where $\Gamma_\final = \Gamma^0$.

To prove the theorem, we prove something more general.
For all $i = 0\ldots n$, if $\Gamma \vdash \pair {m} {s} : \Psi$ is the label of a node in $\calT_i$ then:
\begin{enumerate}
\item[(i)]   If $\Psi$ is a set of state
  formulas then $v \in \{0,1\}$ where $v = \sigma(x_{\pair {m} {s}}^{\Psi})$.
  Moreover, $v = 1$ iff $M_{\pi_\fin(\sigma)}, \pair m s \models \bigwedge \Psi$.
  (And hence $v = 0$ iff $M_{\pi_\fin(\sigma)}, \pair m s \not\models \bigwedge \Psi$.)
\item[(ii)] If $\Psi$ contains at least one proper path formula then
  \begin{equation*}
  \sigma(x_{\pair {m} {s}}^{\Psi}) = \mathop{\Pr^{\calM_{\pi_\fin(\sigma)}}}(\{r \in \textstyle{\Runs^{\calM_{\pi_\fin(\sigma)}}}(\pair {m} {s}) \mid \calM_{\pi_\fin(\sigma)}, r \models \bigwedge \Psi\})
\end{equation*}
\end{enumerate}
Notice that cases (i) and (ii) are exclusive and exhaustive. If $\Psi = \emptyset$ then case (i) applies.

For the proof, fix some $i \in \{ 0\ldots n\}$ and let $\Gamma \vdash \pair {m} {s} : \Psi$ be the label of a node in $\calT_i$.

\emph{Proof of (i).} 
Suppose $\Psi$ is a set of state formulas. We prove the conclusion by
induction on the structure of $\Psi$.

If $\Psi = \emptyset$ then with the $\Checkmark$ rule this is trivial. If there is a
classical $\phi \in\Psi$ then we have two cases: if $L(s) \models \phi$
(or equivalently: $\calM_{\pi_\fin(\sigma)},\pair m s \models \phi$) then we get a successor node with a
$\top$ inference. The result follows by induction together with the
constraint $\gammaone$ of the $\top$ rule.

If $L(s) \not\models \phi$ then with a $\Cross$ inference we get a constraint 
$x_{\pair {m} {s}}^{\Psi} \doteq 0$ in $\Gamma_\final$, which trivially gets the result. In particular it
follows $M_{\pi_\fin(\sigma)},\pair m s \not\models \bigwedge \Psi$.

Hence suppose now every $\phi \in \Psi$ is non-classical. Choose one such $\phi$.
If one of the $\neg\neg$, $\neg \TO P$, $\TO P \neg$, $\land$, $\neg\land$, $\PQ P1$, $\PQ P2$ or $\PQ P3$
inference rules is applicable, we get in each of the (one or two) conclusions a set of state
formulas. This is easy to verify by inspecting the inference rules. In each case the
result follows by induction together with the new constraints introduced by the
inference rules. We spell this out only in the most interesting case, the $\neg\land$ rule.

For the $\neg\land$ rule, the formula set $\Psi$ is of the
form $\{ \neg(\psi_1 \land \psi_2) \} \cup \Psi'$ and $\phi = \neg(\psi_1 \land \psi_2)$. 
For the two conclusions we get results by induction.
Notice that the left conclusion has $\neg \psi_1$ in its formula set $\{\neg\psi_1\} \cup \Psi'$, whereas the
right conclusion has the complement $\psi_1$ in its formula set $\{\psi_1, \neg \psi_2\} \cup \Psi'$.
This entails for the induction results that not both $M_{\pi_\fin(\sigma)},\pair m s \models \{\neg\psi_1\} \cup \Psi'$ and
$M_{\pi_\fin(\sigma)}, \pair m s  \models \{\psi_1, \neg \psi_2\} \cup \Psi'$ can hold. In terms of corresponding variables this
means $v_l, v_r \in \{0,1\}$  but $\{v_l, v_r\} \neq \{1\}$ where 
$v_l = \sigma(x_{\pair {m} {s}}^{\{\neg\psi_1\} \cup \Psi'})$ and
$v_r = \sigma(x_{\pair {m} {s}}^{\{\psi_1, \neg \psi_2\} \cup  \Psi'})$.
Finally, with the definition of the constraint $\gamma$ in the definition
of the $\neg\land$ rule the result follows easily.

The only missing case is when $\phi$ is of the form $\P_{\sim z} \psi$. By closure under the
inference rules with higher priority (those mentioned above) we know that $\psi$ is a
proper path formula.

If the $\PQ P$ rule is applicable then the left conclusion adds the constraint
$\gammaleft = x_{\pair {m} {s}}^{\{\psi\}} \sim z$ and the right conclusion adds the constraint
$\gammaright = x_{\pair {m} {s}}^{\{\psi\}} \eqsim z$.
As the $\PQ P$ is a don't know rule the tree $\calT_k$ has $\CHOOSE$n one of them.

In any case, the \IR{$\PQ P$} rule invokes a tableau derivation which gives us some
tree $\calT_m$ from $\emptyset \vdash {\pair {m} {s}}: \{\psi\}$ in the above sequence of trees. From property (ii)
we get
\begin{equation*}
  \sigma(x_{\pair {m} {s}}^{\Psi}) = \mathop{\Pr^{\calM_{\pi_\fin(\sigma)}}}(\{r \in \textstyle{\Runs^{\calM_{\pi_\fin(\sigma)}}}(\pair {m} {s}) \mid \calM_{\pi_\fin(\sigma)}, r \models \psi\})
\end{equation*}

In the left case, with $(x_{\pair {m} {s}}^{\{\psi\}} \sim z) \in \Gamma_\final$ and the semantics of the
$\PQ P$ operator it follows $\calM_{\pi_\fin(\sigma)},\pair m s \models \P_{\sim z} \psi$.  (Recall that $\Gamma_\final$ is
satisfiable.) With a, not necessarily
immediately, following \IR{$\PQ P \top$} inference the situation is the same as above,
where we had the classical formula $\phi$ such that
$\calM_{\pi_\fin(\sigma)}, \pair m s \models \phi$ and a $\top$ inference. As above, the result follows by induction.

In the right case, with $(x_{\pair {m} {s}}^{\{\psi\}} \eqsim z) \in \Gamma_\final$ and
the semantics of the $\PQ P$ operator it follows $\calM_{\pi_\fin(\sigma)}, \pair m s \not\models \P_{\sim z} \psi$.
With a (not necessarily immediately) following \IR{$\PQ P \Cross$} the result
follows trivially.

If the $\PQ P$ rule is not applicable then $\Gamma$ contains $\gammaleft$ or
$\gammaright$ already, but the same argumentation as for the left/right case applies.

This concludes the proof of property (i).

\emph{Proof of (ii).} The proof is similar to the proof of (i) except for BSCCs,
which requires special consideration.

Let $u$ be the node in $\calT_i$ labelled with $\Gamma \vdash {\pair {m} {s}} : \Psi$, the sequent we are looking at.

What we do is induction on the structure of $\Psi$, much like in case (i).
We are adding up probabilities from the children of
a union-branching to the parent node, in terms of corresponding variables.
That this is correct is clear from inspecting the design of the
inference rules with respect to the PCTL* semantics.
Verifying the $\TO X$ rule may require a little closer look, though, but is not too
difficult. An important detail is that the branching out into $\TO X$-successor nodes
happens according to prescribed actions, by preference of the \textsf{A} rule over
the $\TO X$ rule. If, say, $\alpha \in \Prescribed(\Gamma,{\pair {m} {s}})$ is a prescribed action, then $\Gamma$
(and hence $\Gamma_\final$) contains a variable $x_{\pair {m} {s}}^\alpha$. It follows that the policy
$\pi_\fin(\sigma)$ will accordingly have $\act_\sigma(m, s,\alpha) = \sigma(x_{\pair {m} {s}}^\alpha)$  rather than an arbitrary
choice for making it complete.


If $u$ is also the root of a BSCC in $\calT_i$ then $\FORCE$ing adds $x_{\pair {m} {s}}^\Psi \doteq 1$ or
$x_{\pair {m} {s}}^\Psi \doteq 0$ to $\GAMMA(\calT_i)$ and hence also to $\Gamma_\final$.
This entails $\sigma(x_{\pair {m} {s}}^\Psi) = \chi$ where $\chi$ or $\chi = 0$, respectively.
For proving (ii) this means we need to show
\begin{equation}
  \label{eq:1}
  \mathop{\Pr^{\calM_{\pi_\fin(\sigma)}}}(\{r \in \textstyle{\Runs^{\calM_{\pi_\fin(\sigma)}}}(\pair
  {m} {s}) \mid \calM_{\pi_\fin(\sigma)}, r \models \bigwedge \Psi\}) = \chi
\end{equation}

Let us start by some considerations about runs in relation to the BSCC rooted at $u$.
Consider any run $r$ from
$\pair {m} {s}$. We can trace $r$'s state transitions in $\Subtree(u)$ as follows: at a
poised inner node (such as $u$) select the proper $\TO X$-successor node, say, $v$ as
given by the state transition in $r$ we are currently looking at (initially the first
one, from $u$).  Now consider
$\Subtree(v)$. It cannot be a 0-deadend by definition of ambiguity. In other words,
some of its branches may lead to a $\Cross$-ed leaf without encountering a poised
inner node, but not all of them (and no branch has a $\Checkmark$-ed leaf
either). For any branch of the latter kind we can find the first poised node as we go
down. Let $v_1,\ldots,v_n$ be all these poised \define{fringe} nodes, for some
$n \ge 1$. We can view $\Subtree(v)$ as an analysis of the Boolean structure until
reaching a fringe node. The path component does not advance during that, and so the
fringe nodes each have the same path component as $v$. This is why we can take
\emph{any} $v_i$ for continuing the tracing (with $p_{v_i} = p_v$). Should $v_i$ be a
leaf node we need to follow its backlink to the inner, indistinguishable node first.

This way we can follow $r$ stepwise in the BSCC rooted at $u$.

Let us consider the case $\chi = 1$ in (\ref{eq:1}) above. That is, we have a
\textsf{Yes-Loop}. 

Should we have the freedom to construct $r$ as we wish, we could do it in such a way
that it always passes through the (or one of the) \textsf{Yes-Loop} leaves. This
would give us what is sometimes called a ``lasso'': from $u$ go down the proper branch
and find that inner node, say $w$ ($u = w$ is possible), that the \textsf{Yes-Loop}
leaf backjumps to. Then there is an initial segment from $u$ to $w$ followed by
circles leading back to $u$. The run $r$ would just follows this ``path'' ad infinitum.

This situation is identified by Mark Reynolds for proving his tableau
algorithm~\cite{Reynolds:LTL-Tableaux:GandALF:2016} correct. (The
long version~\cite{Reynolds:LTLTableau:long:2016} has proofs.)
Let us call his tableau ``MR tableaux''. An important difference is
that~\cite{Reynolds:LTLTableau:long:2016} is concerned with LTL
satisfiability, not model checking. (A minor difference is that we have additionally
the $\PQ P$ operator, which however behaves much like a propositional symbol by the
recursive call to the algorithm, plus induction.)
For the soundness proof he assumes as given an
LTL formula and an MR tableau that has (in our words) a \textsf{Yes-Loop} leaf.
From the \textsf{Yes-Loop} he constructs a lasso run as described above,
derives states and a labelling function from it, and shows that this run satisfies
the given LTL formula. He actually has a more general construction which annotates
the states in the runs with poised formula sets and keeps track of the expansions by
the $\TO X$ rule and all other subsequent inferences (which are similar to ours).
See the ``truth lemma'', Lemma 3 in~\cite{Reynolds:LTLTableau:long:2016}. 

We wish to re-use the correctness results for the MR tableaux. The main difficulty is
that, of course, we are not free to construct the run, instead it is given. More
precisely, the given run $r$ might not only follow the lasso, it may include other
segments leading to \textsf{No-Loop}s as well. What we do know, however, is that for
fairness reasons every leaf must be visited infinitely often (recall we are dealing
with a BSCC). This means that the loop in the lasso must be
executed infinitely often along $r$, but not necessarily consecutively.

Because of the just said the truth lemma is not immediately applicable to our run
$r$. But we can think of the run $r$ as executing a modified \textsf{Yes-Loop} loop infinitely often,
with execution segments detouring into \textsf{No-Loop}s spliced in.
By inspecting the proof of the truth lemma it becomes clear that this splicing-in
does not hurt as long as some essential formulas are preserved between
subsequent poised nodes. These formulas are used to establish that $\TO U$-formulas
or negated $\TO U$-formulas, stemming from an $\TO X$ inference to a poised node,
are satisfied by the run in the MR tableau. See Lemmas 2 and 3
in~\cite{Reynolds:LTLTableau:long:2016}. That these formulas are indeed
preserved by our runs as well then follows from Lemma~\ref{lemma:uvw} above: if
there are additional poised nodes spliced into the \textsf{Yes-Loop}s, they are equal to
or supersets of the original poised nodes and, therefore, harmless.
In other words, it can be shown that our run $r$ satisfies $\Psi$.

Finally as our run $r$ is chosen arbitrarily, every run from $\last(p)$ satisfies
$\Psi$. As the probability of the set of all runs is one, we have shown (\ref{eq:1}) for the case $\chi=1$.

It remains to consider the case $\chi = 0$. 
That is, we have no \textsf{Yes-Loop} in the BSCC rooted at $u$. 
Again we refer to the proofs in~\cite{Reynolds:LTLTableau:long:2016}. This time
it is actually easier. The completeness proof
in~\cite{Reynolds:LTLTableau:long:2016} (Lemma 5) needs
to analyze an arbitrary given run and trace it down the branches in the MR
tableau, much like our run $r$ is allowed to walk down any branch and return to an
inner node from its leaves (which are all \textsf{No-Loop}). So this situation is a
better match.

The main argument, in brief, is that there must be an $\TO X$-eventuality in $\Psi$ that
remains unsatisfied along $r$, making $\Psi$ unsatisfied by $r$, if there is no
\textsf{Yes-Loop}. For this it is shown that the loop-check does not prematurely cut
branches. In brief, when a leaf is a \textsf{No-Loop} the branch leading to it
repeatedly failed to satisfy an $\TO X$-eventuality and it would not help to make this
same mistake twice.

Finally as our run $r$ is chosen arbitrarily, no run from $\pair {m} {s}$ satisfies
$\Psi$. As the probability of the empty set of runs is zero, we have shown (\ref{eq:1}) for the case $\chi=0$.

This concludes the proof of property (ii).

Finally it is easy to prove the conclusion $\calM, \pi_\fin(\sigma) \models \phi$ of the
theorem, which expands to the equivalent statement
$\calM_{\pi_\fin(\sigma)}, \pair {\start(s_\init)} {s_\init} \models \phi$. 
Because $\phi$ is a state formula we can apply property (i) to
the tree $\calT_0$ from $\{x_{\pair {\start(s_\init)} {s_\init}}^{\{\phi\}}\doteq 1\} \vdash {\pair {\start(s_\init)} {s_\init}}:\{\phi\}$ and conclude
$v \in \{0, 1\}$ where $v = \sigma(x_{\pair {\start(s_\init)} {s_\init}}^{\{\phi\}})$. With
$x_{\pair {\start(s_\init)} {s_\init}}^{\{\phi\}}\doteq 1 \in \Gamma_\final$ trivially $v = 1$. By property (i) again 
$\calM_{\pi_\fin(\sigma)}, \pair {\start(s_\init)} {s_\init} \models \phi$.\qed
\end{proof}

\begin{theorem*}[Completeness]{th:completeness}
  Let $\calM = (S, s_\init, A, P, L)$ be an MDP,
  $\pi_\fin = (M, \start, \Delta, \act)$ 
  a finite-memory policy, and $\phi$ a state formula. Suppose
  $\calM, \pi_\fin \models \phi$. Then there is a satisfiable program
  $\Gamma_\final := \GAMMA(\CHOOSE(\TABLEAU(\{x_{\pair {\start(s_\init)}
    {s_\init}}^{\{\phi\}}\doteq 1\} \vdash \pair {\start(s_\init)} {s_\init}:\{\phi\})))$
  and a solution $\sigma$  of $\Gamma_\final$ such that $\act_\sigma(m,s,\alpha) = \act(m,s,\alpha)$
  for every pair $\pair m s$ in the policy domain of $\Gamma_\final$.
  Moreover $\calM, \pi_\fin(\sigma) \models \phi$.
\end{theorem*}

\begin{proof}
(Sketch.)
Let $\phi$ be as stated and suppose there is a finite-memory policy $\pi_\fin$ such that
$\calM, \pi_\fin \models \phi$. In other words $\calM_{\pi_\fin}, \pair  {\start(s_\init)}
{s_\init} \models \phi$.

A preliminary: a \define{binding} is a pair$(x,v)$, usually written as $x \mapsto v$, where
$x$ is a variable and $v \in [0,1]$. A \define{substitution} is a finite set of
bindings.  The set $\dom(\sigma) = \{ x \mid (x \mapsto v) \in \sigma \text{ for some $v$ }\}$ is
called the \define{domain of $\sigma$}. We use substitutions to construct solutions of
programs. 

The proof plan is to construct $\TABLEAU(\{x_{\pair {\start(s_\init)} {s_\init}}^{\{\phi\}}\doteq 1\} \vdash \pair {\start(s_\init)} {s_\init}:\{\phi\})$ in a similar
fashion as one would do in the soundness proof for classical tableau, by analysing $\phi$
syntactically and going down branches. It is a bit trickier in the completeness case, though.
We have to $\CHOOSE$ along the way as needed for complying with the prescribed
actions in the given policy $\pi_\fin$ and the $\PQ P$-formulas. We also have to show that the
$\GAMMA$ operations results in a satisfiable $\Gamma_\final$. The rest of the proof spells
this out in more detail.

We need something general to keep the induction going, as follows.

As in the soundness proof we work with a sequence of trees
$\calT_0,\calT_1,\ldots,\calT_n$ such that
\begin{align*}
  \calT_0 & = \CHOOSE(\TABLEAU(\{x_{\pair {\start(s_\init)} {s_\init}}^{\{\phi\}}\doteq 1\} \vdash \pair {\start(s_\init)} {s_\init}:\{\phi\})) \\
  \calT_j & = \CHOOSE(\TABLEAU(\emptyset \vdash \pair {m_j} {s_j}: \{ \psi_j \}))\quad  \pbox[t]{ for $j=1\ldots n$, where $\calT_j$ 
            comes \\ from some \IR{$\TO P$} inference in the\\ derivation}
\end{align*}
Again, for $i = 0\ldots n$ we call $\calT_i$ a \define{tree from $\Gamma_i \vdash \pair
{m_i} {s_i} : \{\psi_i \}$}, and we accompany the sequence of trees with a sequence 
$\Gamma_0,\Gamma_1,\ldots,\Gamma_n$, where $\Gamma_i = \GAMMA(\calT_i)$, for $i = 0\ldots n$.
Together they represent a derivation from
$\{x_{\pair {\start(s_\init)} {s_\init}}^{\{\phi\}}\doteq 1\} \vdash \pair {\start(s_\init)} {s_\init}:\{\phi\}$ where $\Gamma_\final = \Gamma^0$.

Unlike as in the soundness proof, these sequences are not given a priori. Indeed, we have
to show they exist. We do this iteratively with the help of
a couple of variables, collectively called the \define{induction variables}:
\begin{itemize}
\item The sequence of the $\calT_i$'s, initialized with a one-tree sequence
$\calT_0$  with a root node only labelled with 
$\{x_{\pair {\start(s_\init)} {s_\init}}^{\{\phi\}}\doteq 1\} \vdash \pair {\start(s_\init)}{s_\init}:\{\phi\}$
\item The sequence of the $\Gamma_i$'s initialized with a one-program sequence\\
  $\Gamma_0=\{x_{\pair {\start(s_\init)} {s_\init}}^{\{\phi\}}\doteq 1\}$.
\item A current substitution $\sigma$, initialized with
  \begin{equation*}
    \sigma = \{ x_{\pair m s}^\alpha \mapsto \act(m,s,\alpha) \mid \pair m s \in M \times S \text{ and } \alpha \in A(s) \} 
    \cup \{ x_{\pair {\start(s_\init)} {s_\init}}^{\{\phi\}} \mapsto 1 \}
  \end{equation*}
\end{itemize}
To extend this initial state to a derivation we pick any leaf in any $\calT_i$ and
apply any inference rule to it, subject only to preference constraints. (This freedom
is needed to match the claim that the inference rules can be applied in a don't-care
fashion, subject only to the preference constraints.)

On applying an inference rule we will show that the following \define{invariant}
is preserved: if $\Gamma \vdash \pair {m} {s}:\Psi$ is the label of a node in some $\calT_i$ then all of the
following holds:
\begin{itemize}
\item[(i)] $x_{\pair {m} {s}}^{\psi} \in \dom(\sigma)$ and $\sigma$ is a solution of $\Gamma_j$, for all $j = 0\ldots n$.
\item[(ii)] $\act_\sigma(m, s ,\alpha) = \act(m, s,\alpha)$ for every $\pair {m} {s} \in
M\times S$ such that $\pair {m} {s}$ is in the  policy-domain of $\Gamma_0$.
\item[(iii)]
  if $\Psi$ is a set of state formulas then
      \begin{quote}
        $\calM_{\pi_\fin(\sigma)}, \pair m s \models \bigwedge \Psi$\quad iff\quad
      $\calM_{\pi_\fin}, \pair m s \models \bigwedge \Psi$ \quad iff\quad
      $\sigma(x_{\pair {m} {s}}^\Psi) = 1$.

      Moreover $\sigma(x_{\pair {m} {s}}^\Psi) \in \{0,1\}$.
      \end{quote}
    \item[(iv)]
      if $\Psi$ contains at least one proper path formula then
      \begin{align*}
        \sigma(x_{\pair {m} {s}}^\Psi)  & = \textstyle\Pr^{\calM_{\pi_\fin(\sigma)}}(\{r \in \Runs^{\calM_{\pi_\fin(\sigma)}}(\pair {m} {s}) \mid \calM_{\pi_\fin(\sigma)},r \models \bigwedge \Psi\})\\
&= \textstyle\Pr^{\calM_{\pi_\fin}}(\{r \in \Runs^{\calM_{\pi_\fin}}(\pair {m} {s}) \mid
        \calM_{\pi_\fin},r \models \bigwedge \Psi\})
      \end{align*}
    \end{itemize}
We refer to (i)--(iv) collectively as the \define{invariant}.

The invariant holds initially. This is trivial for each (i)--(iv).
We will show below that the invariant is preserved by applying an inference rule,
anywhere. This requires updating the induction variables appropriately. This process
will end in the announced derivation, and, by construction, $\Gamma_0$ will be $\Gamma_\final$.

Once this is done, the theorem is proved easily:
\begin{itemize}
\item That $\Gamma_\final$ is satisfiable is trivial with (i).
  \item That $\act_\sigma(m,s,\alpha) = \act(m,s,\alpha)$ for every $m\times s$ in the
    policy-domain of $\Gamma_\final$ becomes identical to (ii).
    \item $\calM_{\pi_\fin(\sigma)}, \pair {\start(s_\init)} {s_\init} \models \phi$ follows
      from (iii) and $\calM_{\pi_\fin}, \pair {\start(s_\init)} s_\init \models \phi$
      as given.
\end{itemize}

Hence it only remains to prove the invariant.

Choose some $\calT_i$ and some leaf in $\calT_i$ arbitrarily. Let $\Gamma \vdash \pair {m} {s}:\Psi$ be
the leaf's label. Assume the invariant holds for $\Gamma \vdash \pair {m} {s}:\Psi$.
Choose any inference rule applicable to $\Psi$, not violating preference constraints,
and apply it. Then show that the invariant holds afterwards.

In the first case $\Psi$ is a set of state formulas.
Partition $\Psi = \{\phi\} \uplus \Psi'$ to match the form of the inference rule. 

If the rule is  $\top$,  add $\gammaone\ =\ x_{\pair {m} {s}}^{\{\phi\} \uplus \Psi'} \doteq x_{\pair {m} {s}}^{\Psi'}$
to $\Gamma_i$. All inference rules are mutually exclusive for a fixed selected formula
$\phi$. Thus, if $\Gamma_i$ , or any other $\Gamma_j$ for $i \neq j$, already contains an equality for 
$x_{\pair {m} {s}}^{\{\phi\} \uplus \Psi'}$ it must be the same $\gammaone$. In this case there is nothing
  to show for (i) and (ii) to carry over.
  If, otherwise $\gammaone$ is fresh then we extend $\sigma$ by the new
  binding $x_{\pair {m} {s}}^{\{\phi\} \uplus \Psi'} \mapsto \sigma(x_{\pair {m} {s}}^{\Psi'})$.
  Because $x_{p}^{\{\phi\} \uplus \Psi'}$ is fresh the new binding cannot interfere with those
  already in the domain of $\sigma$ before the binding was added.
  In any case 
  \begin{equation}
    \label{eq:p:1}
    \sigma(x_{\pair {m} {s}}^{\{\phi\} \uplus \Psi'}) = \sigma(x_{\pair {m} {s}}^{\Psi'})\enspace.
  \end{equation}
  
  The applicability condition of $\top$ tells us
  $L(s) \models \phi$. With the semantics of PCTL* it follows
  \begin{align}
\textstyle  \calM_{\pi_\fin(\sigma)}, \pair m s \models \phi \land \bigwedge \Psi' & \text{ iff }
\textstyle  \calM_{\pi_\fin(\sigma)},  \pair m s \models \bigwedge \Psi' \label{eq:p:3} \\
\textstyle    \calM_{\pi_\fin}, \pair m s \models \phi \land \bigwedge \Psi' & \text{ iff }
\textstyle    \calM_{\pi_\fin}, \pair m s \models \bigwedge \Psi' \label{eq:p:2} 
  \end{align}
  Substituting (\ref{eq:p:1}), (\ref{eq:p:3}) and (\ref{eq:p:2}) into the iff-chain in (iii) gives
      \begin{quote}
      $\calM_{\pi_\fin}, \pair m s \models \bigwedge \Psi'$\quad iff\quad
      $\calM_{\pi_\fin(\sigma)}, \pair m s \models \bigwedge \Psi'$ \quad iff\quad
      $\sigma(x_{\pair {m} {s}}^{\Psi'}) = 1$.
    \end{quote}
    which proves (iii) for the new leaf.
    For (iv) there is nothing to show. 
    This concludes the proof for the case that the $\top$ rule is applied.

    The proofs for the inference rules
    $\Cross$, $\Checkmark$, $\neg\neg$, $\neg\PQ P$, $\PQ  P\neg$, $\land$, $\neg\land$, $\PQ P1$, $\PQ P2$,
    $\PQ P3$ all are analogous to the $\top$ rule. We do not carry them out in detail.
    Notice again (as in the soundness proof) that in the $\neg\land$ rule the
    left and right conclusions are mutually exclusive, which enables correctly taking
    the sum of the corresponding variables. 

    It remains to consider the inference rules for $\PQ P$-formulas. The rules are
    $\PQ Q$, $\PQ P\top$ and $\PQ P\Cross$.

    If $\PQ P$ is applicable then there is a formula
    $\P_{\sim z} \phi \in \Psi$ such that $\gammaleft \notin \Gamma_i$ and $\gammaright
    \notin \Gamma_i$.

    The $\PQ P$ inference starts a new derivation from $\emptyset \vdash \pair {m} {s}: \{\phi\}$.
    In terms of our proof it starts with a tree $\calT_k$ for $\emptyset \vdash \pair {m} {s}: \{\phi\}$, for
    some $k >i$. The $\Gamma'$ mentioned in the $\PQ P$-rule is 
    $\Gamma_k$ after that derivation has finished.
    Hence assume that this derivation (and possibly further sub-derivations)
    have already been carried out. 
    We get as a result new trees
    starting from $\calT_k$ and new programs starting from $\Gamma_k$ and the invariant
    will be preserved by induction. 
    Property (i) then tells us that $\sigma$ is a solution of $\Gamma_j$, for all $j = 0\ldots n$.
    Now, with $\Gamma \subseteq \Gamma_i$ (by construction, we add all $\Gamma$'s in
    the pivots as we encounter them) and the fact that in the $\PQ P$ inference the
    program $\Gamma'$ is just $\Gamma_k$ it follows that $\sigma$ is a solution for 
    $\Gamma \cup \Gamma'$ and we could add $\Gamma \cup \Gamma'$ to $\Gamma_i$ without affecting 
    (i).

    The conclusion of the $\PQ P$-inference, however, is either
    $\Gamma \cup \Gamma ' \cup \{\gammaleft\}$ or $\Gamma \cup \Gamma' \{\gammaright\}$.
    Notice we already have the tree $\calT_k$ for $\emptyset \vdash \pair {m} {s} : \{\phi\}$. By
    invariant (iv) $\sigma$ will contain a binding for $x_{\pair {m} {s}}^{\{\phi\}}$
    (which is $x_{\pair {m} {s}}^{\{\phi\}} \mapsto \Pr^{\calM_{\pi_\fin}}(\{r \in \Runs^{\calM_{\pi_\fin}}(\pair {m} {s}) \mid \calM_{\pi_\fin},r \models \phi\})$).
    This binding satisfies $\gammaleft$ or $\gammaright$ and, hence, prescribes
    $\CHOOSE$ing the left or the right conclusion for extending $\calT_i$ so that 
    invariant (i) is preserved. 
    Properties (ii), (iii) and (iv) are all trivially preserved by the extension. 
    
    The open cases are the $\PQ \top$and the $\PQ P\Cross$ rules. Their proofs are
    analogous to the $\top$and the $\Cross$ rules and are omitted.

    This completes the proof for the case that $\Psi$ is a set of state formulas.  Hence
    assume now that $\Psi$ contains at least one state formula.  The relevant invariant
    properties are (i), (ii) and (iv).

    The proofs are analogous to the ones when $\Psi$ is a set of state formulas.
    Essentially, the invariant follows from the design of the inference rules.
    This holds true also for the $\TO X$-rule, which is a bit tedious to inspect.
    One has to verify that it correctly reflects the transition
    probability function of the Markov chain $\calM_{\pi_\fin}$, cf.\
    Section~\ref{sec:preliminaries}.
    Two important issues, though.

    First, the \textsf{A} inferences are preceding the $\TO X$-inferences.
    Unlike as in the soundness proof, the \textsf{A} inferences must be carried out
    consistently with the given policy $\pi_\fin$. More precisely, we $\CHOOSE$ the left conclusion
    $x_{\pair {m} {s}}^\alpha \doteq 0$ in an \textsf{A} inference 
    if $\act(m, s, \alpha) = 0$, and we $\CHOOSE$ the right conclusion
    $x_{\pair {m} {s}}^\alpha > 0$ otherwise. The variable $x_{\pair {m} {s}}^\alpha$ is already in the domain of
    $\sigma$ because a binding to the value $x_{\pair {m} {s}}^\alpha \mapsto \act(m, s, \alpha)$ was put there initially.
    This shows that the invariant is preserved for \textsf{A} inferences.

    At the end of the derivation all bindings for the variables $x_{\pair {m} {s}}^\alpha$
    such that $\pair {m} {s}$ is not in the policy domain of any $\calT_j$ can be
    removed from $\sigma$.
    This allows us to prove the missing detail of invariant (ii).
    
    Second, BSCCs and $\FORCE$ing. Suppose the node we are looking at is $u$, the one
    with the sequent $\Gamma \vdash \pair {m} {s}:\Psi$. 
    The derivation adds a constraint $x_{\pair {m} {s}}^\psi \doteq \chi$ for some $\chi \in  \{0,1\}$ if $u$ is
    the root of a BSCC in $\calT_i$. 
    We need to argue for the correctness of doing that.

    That $u$ is the root of a BSCC has a meaning in terms of the Markov Chain induced
    by the MDP $\calM$ and the policy $\pi_\fin$: the policy state $\pair {m} {s}$ is an entry node into a BSCC in
    the state  transition diagram of this Markov Chain. The state transitions are given
    by the actions prescribed by $\pi_\fin$, and the same actions are prescribed for the
    state transitions in the BSCC rooted at $u$. This follows from the
    \textsf{A} inferences as just explained. 
    This is why the runs from $\pair {m} {s}$ in the Markov Chain are exactly the same as
    the runs  from $\pair {m} {s}$ starting from $u$ in $\calT_i$ by going down branches
    and following backlinks. Because in BSCCs in Markov Chains
    either all (fair) runs or no run satisfy $\Psi$, the probability $\Pr^{\calM_{\pi_\fin}}(\{ r \in
    \Runs^{\calM_{\pi_\fin}}(\pair {m} {s}) \mid \calM_{\pi_\fin},r  \models \bigwedge \Psi\})$ is either 1 or 0.
    By the correctness of $\FORCE$ing (cf.\ the
    soundness proof)
    the corresponding constraint $x_{\pair {m} {s}}^{\Psi} \doteq 1$ or $x_{\pair {m} {s}}^{\Psi} \doteq 0$ will be correctly
    added in $\Gamma_i$. This is sufficient to prove the invariant (iv).\qed
  \end{proof}


\end{document}